\documentclass[conference]{IEEEtran}
\IEEEoverridecommandlockouts
\usepackage{cite}
\usepackage{amsmath,amssymb,amsfonts}
\usepackage{algorithmic}
\usepackage{graphicx}
\usepackage{textcomp}
\usepackage{xcolor}
\usepackage{amsthm}
\usepackage{booktabs}   
\usepackage{multirow}   
\usepackage{array}      
\usepackage{siunitx}    

\makeatletter
\def\@IEEEtablestring#1{\footnotesize #1}
\makeatother

\theoremstyle{plain}
\newtheorem{theorem}{Theorem}
\newtheorem{lemma}{Lemma}
\newtheorem{proposition}{Proposition}
\newtheorem{corollary}{Corollary}

\theoremstyle{definition}
\newtheorem{assumption}{Assumption}

\theoremstyle{remark}

\def\BibTeX{{\rm B\kern-.05em{\sc i\kern-.025em b}\kern-.08em
    T\kern-.1667em\lower.7ex\hbox{E}\kern-.125emX}}
\begin{document}

\title{Credit Limits beyond Full Collateralization in Decentralized Micropayments: Incentive Conditions}

\author{
\IEEEauthorblockN{Chien-Chih Chen}
\IEEEauthorblockA{\textit{Electrical and Computer Engineering} \\
\textit{University of Waterloo} \\
Waterloo, ON, Canada \\
j2255che@uwaterloo.ca}
\and
\IEEEauthorblockN{Wojciech Golab}
\IEEEauthorblockA{\textit{Electrical and Computer Engineering} \\
\textit{University of Waterloo} \\
Waterloo, ON, Canada \\
wgolab@uwaterloo.ca}
}

\maketitle

\begin{abstract}
In decentralized non-custodial micropayments, the central challenge is not whether payments can be executed directly, but under what conditions such systems can offer credit limits without requiring full collateral backing. Existing approaches typically tie available credit to posted collateral, causing liquidity requirements to scale with transaction volume and settlement exposure and limiting the practical usefulness of credit-based micropayments. This paper characterizes the incentive conditions under which credit-based non-custodial micropayments can operate beyond full collateralization while remaining incentive compatible. We model repeated buyer--merchant interactions under public monitoring and identify the roles of bounded exposure, verifiable settlement outcomes, and continuation value in deterring strategic default under non-custodial execution. The resulting characterization clarifies the trade-off between capital efficiency and the enforcement conditions required to sustain under-collateralized credit expansion without custodial trust. As an illustrative application-layer instantiation, an Arbitrum Nitro prototype provides execution-level evidence that the settlement, commitment, and incentive-enforcement paths of a credit-limit-based design can be realized with low on-chain overhead.
\end{abstract}

\begin{IEEEkeywords}
Micropayments, Capital efficiency, Incentive compatibility, Repeated games, Auction mechanisms, Perfect Public Equilibrium, Non-custodial
\end{IEEEkeywords}

\section{Introduction}
\label{sec_micropayment_intro}

In decentralized non-custodial micropayments, the central challenge is not whether payments can be executed directly, but under what conditions such systems can offer credit limits without requiring full collateral backing. In conventional non-custodial designs, available credit is tightly coupled to posted collateral, causing liquidity requirements to scale with transaction volume and settlement exposure. This coupling induces capital lock-in and limits the practical usefulness of micropayment systems in settings where repeated low-value transactions would benefit from short-term credit.

This paper studies the incentive conditions under which decentralized non-custodial micropayment systems can support credit-based operation beyond full collateralization. Our focus is not on a single protocol instance, but on a broader design question: when can a non-custodial micropayment system provide credit limits while preserving incentive compatibility under strategic buyer--merchant interaction? Framed this way, the problem is fundamentally one of incentive design under capital constraints. Any mechanism that aims to support under-collateralized credit in a non-custodial environment must address the trade-off between capital efficiency and the enforcement conditions required to deter strategic default.

We characterize this trade-off through a repeated-game model of buyer--merchant interaction under public monitoring, bounded exposure, and asymmetric incentives. Within this framework, we identify conditions under which credit-based non-custodial micropayments can sustain conforming behavior beyond fully collateralized operation. On the buyer side, credit expansion must be disciplined by continuation value and bounded default incentives. On the merchant side, timely fulfillment must remain incentive compatible under publicly verifiable settlement outcomes and protocol-enforced penalties. The resulting characterization identifies a regime in which under-collateralized credit can be supported without custodial control, while also clarifying the assumptions that such support requires.

To connect the characterization to an execution setting, we use an application-layer instantiation with evolving credit limits and controlled over-limit liquidity provision. This instantiation serves as an illustrative implementation substrate rather than the main contribution of the paper. A prototype on Arbitrum Nitro is used only to demonstrate that the settlement, commitment, and incentive-enforcement paths of such a credit-limit-based design can be realized with low on-chain overhead, without modifying the underlying rollup protocol.

Our contributions are threefold. First, we characterize the incentive conditions required for decentralized non-custodial micropayment systems to support credit limits beyond full collateralization. Second, we identify the trade-off governing such credit-based operation: improving capital efficiency requires enforcement conditions that bound strategic default incentives under repeated buyer--merchant interaction. Third, we clarify a regime in which under-collateralized credit can be sustained without custodial control, while making explicit the assumptions on exposure, verifiable settlement outcomes, and future participation value under which this support holds.

The remainder of this paper is organized as follows.
Section II summarizes the application-layer instantiation used to ground the incentive model. Section~\ref{sect:micropayment_formal_model} formalizes the game-theoretic model.
Sections~\ref{sect:micropayment_merchant_incentives} and
\ref{sect:micropayment_buyer_incentives} analyze incentive and equilibrium properties for
merchants and buyers.
Section~\ref{sect:micropayment_implementation} discusses implementation feasibility.
Section~\ref{sect:micropayment_experimental_evaluation} presents experimental results.
Section~\ref{sect:micropayment_discussion_limitations} discusses scope and limitations.
Section~\ref{sect:micropayment_related_work} reviews related work.
Section~\ref{sect:micropayment_conclusion} concludes.

\section{System and Mechanism Overview}
\label{sect:micropayment_overview}
This section summarizes an application-layer instantiation of the incentive conditions characterized in this paper. The instantiation combines (i) a stake-based credit limit mechanism with history-dependent access, (ii) an over-limit auction for temporary liquidity extension, and (iii) contract-enforced reward and penalty mechanisms that update access privileges based on publicly verifiable outcomes for buyers and merchants.

The design supports continued off-chain transaction processing through delayed settlement via batch processing, which reduces on-chain workload. Off-chain execution is modeled as untrusted and verifiable, with no reliance on honest execution; all incentive-relevant outcomes are enforced through on-chain verification of committed state. The term ``reputation'' is used in this paper only as a conceptual description: 
the protocol does not implement a standalone reputation or trust-scoring module. Instead, reputation effects arise endogenously from publicly observable histories and penalty states, together with agents’ equilibrium continuation values used purely for analytical characterization.

Before introducing the protocol components, Table~\ref{tab:core_terms} summarizes the main terminology used throughout the paper. The table is not intended as a complete notation glossary; it is included to clarify the roles of the main economic, monitoring, and implementation terms before the formal model is introduced.

\begin{table*}[t]
\centering
\caption{Core terminology used in the incentive characterization.}
\label{tab:core_terms}
\begin{tabular}{p{0.18\textwidth} p{0.16\textwidth} p{0.58\textwidth}}
\toprule
Term & Symbol or notation & Role in this paper \\
\midrule
Buyer & \(i\in B\) & Receives goods or services and repays unsettled obligations under a credit limit. \\

Merchant & \(j\in M\) & Accepts micropayments and fulfills service or delivery obligations. \\

Credit limit & \(CL_i\) & Maximum unsettled value that buyer \(i\) can use before settlement. \\

Trust state & \(T_{i,t}\) & Public state summarizing incentive-relevant history; it is not a separate reputation module. \\

Epoch & \(t\) & Discrete settlement period; rewards, penalties, and credit updates are evaluated at epoch boundaries. \\

Public signal & \(y_t\) & Verifiable settlement signal observed by all participants after epoch \(t\). \\

Public history & \(y^t=(y_0,\ldots,y_{t-1})\) & Sequence of public signals observed before epoch \(t\); public strategies condition on this history. \\

Continuation value & \(V_{i,\mathrm{cont}}\) & Discounted value of future credit access and protocol-mediated transactions. \\

Over-limit auction & -- & Bounded epoch-level mechanism for temporary credit extension beyond the current credit limit. \\

\bottomrule
\end{tabular}
\end{table*}

\paragraph{Credit limits and off-chain accounting}
Each buyer is admitted to transact under an individual credit limit determined by financial stake and behavioral trust, allowing micropayments to be authorized without immediate on-chain settlement. Credit usage is tracked through a local off-chain usage record and an offline record of pending transactions, maintained solely for execution convenience rather than as authoritative state, with all incentive-relevant outcomes ultimately validated against on-chain commitments and publicly verifiable proofs at settlement. Instead of submitting each micropayment on-chain, transactions are aggregated into a Merkle tree, and a single root is committed on-chain at the settlement boundary, preserving auditability for settlement and post hoc verification. The protocol monitors each buyer’s cumulative unsettled value relative to the assigned credit limit, raises a misuse flag when outstanding usage exceeds the permitted bound, and routes requests that exceed the credit limit but remain within admissible risk bounds to the over-limit auction mechanism.

\paragraph{Operational Flow}
Micropayment transactions are processed off-chain within each settlement epoch. When a buyer initiates a payment, the protocol immediately checks and consumes the buyer’s available credit limit and records the transaction in an off-chain usage log. Individual transactions are not submitted on-chain in real time; instead, transactions accumulated during the epoch are aggregated into a Merkle tree, and a single root is committed on-chain at the settlement boundary. During settlement, individual transactions and their associated credit usage can be verified by submitting standard Merkle inclusion proofs against the committed root, allowing the smart contract to deterministically enforce rewards, penalties, and state transitions without storing per-transaction data on-chain.

\paragraph{Over-limit auction for temporary credit extension}
When a payment request exceeds the available credit limit by a marginal liquidity deficit, the buyer may request an over-limit auction to obtain a temporary credit extension under controlled risk. 
The auction is executed as a stage-wise, epoch-isolated mechanism: while buyers may trigger over-limit auctions in multiple epochs, at most one such auction is permitted per buyer in each epoch, and all auction outcomes are settled within the same settlement period. 
This design preserves atomicity by restricting each buyer to at most one over-limit auction per epoch and requiring all auction outcomes to be settled within the same settlement period, thereby bounding short-term credit exposure across epochs. 
The strategic implications of credit expansion are addressed in the repeated-game analysis.

Guarantors participate through a standard two-phase commit--reveal auction process. 
In the commit phase, eligible guarantors submit cryptographic commitments to their bids without revealing bid values, together with stake locked in the smart contract, which is released or slashed upon auction settlement. 
In the subsequent reveal phase, bids are disclosed and verified against prior commitments, and only valid reveals are considered for winner determination. 
The auction adopts a reverse Vickrey design \cite{vickrey1961counterspeculation}. 
When an over-limit auction is triggered, the buyer specifies a protocol-enforced upper bound on the acceptable borrowing cost; bids exceeding this bound are rejected, and the auction fails if no admissible bid exists. 
Upon auction settlement, the winning guarantor directly transfers the corresponding payment amount to the merchant, allowing the transaction to complete without routing funds through the buyer. 
Public on-chain signals, including trust scores, are used to align risk information across participants and mitigate common-value effects, while guarantors’ private costs determine allocation outcomes.

\paragraph{Incentive mechanisms, enforcement, and time structure}
Incentives are enforced through deterministic, contract-executed rewards and penalties, based on verifiable outcomes at settlement boundaries. On-chain actions,\footnote{%
Off-chain execution may affect transaction liveness, including censorship.
We model off-chain execution as untrusted but verifiable:
only transactions with valid on-chain evidence affect settlement or incentives.} including timely repayment, are directly verifiable through transaction timestamps and value transfers. Off-chain delivery outcomes are abstracted as externally verifiable public signals, without committing to a specific realization mechanism. The reward mechanism provides credit-related benefits and stake-based compensation. The penalty mechanism imposes monetary sanctions, including late payment penalties and collateral slashing upon default, as well as privilege reductions, through trust score degradation and credit contraction. Beyond immediate penalties, the protocol applies a finite-state control process, consisting of a punishment phase of fixed duration followed by a recovery phase with rate-limited trust regeneration. The system operates in discrete epochs of four hours, with a deterministic settlement timeline, aligning reward and penalty accounting with the repeated-game analysis developed later. On-chain actions are verifiable through transaction timestamps and value transfers, while off-chain delivery outcomes correspond to externally verifiable public signals $y_t$. These verifiable outcomes define the public signals used in the incentive analyses
that follow. 

This paper therefore studies the verifiable-delivery regime for merchant-side incentive enforcement. In this regime, fulfillment outcomes are admissible for the analysis only when service completion, delay, or non-delivery can be represented by a digitally verifiable or publicly auditable signal before settlement. The model decouples event detection from economic enforcement: the settlement contract does not need to observe physical delivery directly, but it must receive a reliable public signal \(y_t\) that can trigger rewards, penalties, and trust-state updates. Such a signal may be produced by a cryptographic delivery acknowledgment, a logistics-oracle proof, a digitally logged service-completion event, or a dispute-resolution layer whose output is accepted by the settlement contract. Physical-world sensing, oracle construction, and unverified buyer self-reporting are therefore outside the scope of this characterization. Extending the framework to those settings would change the monitoring structure and require a separate analysis of signal generation and dispute resolution.

\paragraph{Epoch-Level State Commitment and Execution Model}
\label{sect_overview_offchain_node}
We now make explicit the execution semantics and trust assumptions
associated with the off-chain aggregation and epoch-level state commitment
used throughout the protocol. The protocol processes micropayment transactions off-chain under a discrete epoch structure. Each transaction is assigned to exactly one settlement epoch at the time of submission. Incentive-relevant state transitions, including credit consumption, reward accrual, and penalty enforcement, are evaluated only at epoch boundaries. Transactions recorded within the same epoch are aggregated and summarized by a single Merkle root. For each epoch \(t\), at most one state commitment in the form of a Merkle root \(R_t\) is accepted by the settlement contract, indexed by the epoch identifier. Root submission is regulated by the settlement contract, which enforces a fixed submission window and rejects multiple submissions for the same epoch as well as submissions referencing past epochs. Committed roots therefore form a strictly ordered, append-only sequence \(\{R_0, R_1, \ldots\}\) enforced at the contract level. Once a root is recorded on-chain, it is immutable and cannot be replaced or rolled back, providing epoch-level durability for previously settled state.

The off-chain component responsible for aggregating transactions and constructing Merkle trees is treated as untrusted. Correctness does not rely on honest execution by this component. Attempts to submit stale roots, omit transactions, or equivocate across epochs do not permit retroactive modification of committed state and do not alter incentive outcomes associated with earlier epochs. Any transaction that affects settlement, incentive enforcement, or credit state must be accompanied by a valid Merkle inclusion proof against the committed root of its corresponding epoch. Inclusion proofs may be submitted by any party at settlement to trigger deterministic enforcement by the on-chain contract. Transactions not accompanied by valid inclusion proofs have no protocol effect. Omission of a transaction from a committed root therefore results only in the exclusion of that transaction and does not compromise previously committed state.

The protocol maintains the following invariants across epochs. First, \emph{epoch atomicity} ensures that incentive-relevant effects are evaluated only at settlement boundaries. Second, \emph{append-only commitment} ensures that committed roots are immutable and strictly ordered by epoch. Third, \emph{verifiable inclusion} ensures that no off-chain transaction can affect protocol state without an on-chain verifiable proof. These invariants rule out stale root submission, rollback of settled state,
duplicate commitment, and retroactive state modification, while keeping
on-chain overhead bounded.

\section{Formal Model}
\label{sect:micropayment_formal_model}
This section formalizes buyer–merchant interactions induced by the proposed non-custodial Layer~2 micropayment protocol. 
Buyers and merchants\footnote{As explained in Section~\ref{sect:micropayment_overview}, the over-limit auction follows a Vickrey stage-wise design under a renewal-exclusion rule, so guarantors’ incentives are resolved at the epoch level and excluded from the repeated-game formulation.} 
interact repeatedly, and the protocol defines protocol-level abstract state variables, including trust states, credit limits, fee multipliers, and penalty states. 
The evolution of these states is constrained by publicly observable and verifiable outcomes, but does not require full persistence of all state variables on-chain. 
As a result, deviations in one period affect future eligibility, rewards, and penalties through history-dependent state transitions.

We adopt PPE and apply the one-shot deviation
principle (OSDP) by comparing the payoff from a one-shot deviation with the
discounted continuation value under conforming behavior.

\subsection{Repeated-Game Framework}
\label{sec:repeated_game_framework}
We use a public monitoring structure aligned with the repeated-game literature~\cite{fudenberg1991game, mailath2006repeated}, where incentive-relevant events, including payments, delivery outcomes, and penalty triggers, are publicly observable through on-chain signals and cryptographic commitments, including Merkle root commitments and commit--reveal auction records, enforced by smart contracts. The analysis characterizes the conditions under which IC and liveness can be sustained under credit expansion, with formal parameter definitions introduced in subsequent sections.

We model the interaction as an infinitely repeated game with hidden protocol state and public monitoring~\cite{shoham2008multiagent}.\footnote{
We assume blockchain persistence so that confirmed transactions and incentive-relevant state transitions are publicly observable. Ordering uncertainty due to latency or fee dynamics is abstracted by modeling interaction at a discrete settlement granularity, formalized as epochs in subsequent sections.
}
Each agent $i$ has a publicly observable trust state $T_{i,t}$ and a credit-limit state $CL_{i,t}$ that is not explicitly maintained on-chain. The evolution of $CL_{i,t}$ is constrained by publicly verifiable outcomes and enforced indirectly through contract-executed penalties and access restrictions, without requiring the credit limit itself to be persistently stored on-chain. While the numerical value of $CL_{i,t}$ is not directly exposed, its effects are reflected in publicly observable outcomes, including transaction acceptance, fee adjustments, and trust state updates. Because the credit-limit state $CL_{i,t}$ is not fully observable, standard subgames defined over full action histories are not directly applicable; instead, we restrict attention to public strategies that condition only on publicly observable signal histories, as in PPE.

\paragraph{Standing Assumptions}
We consider an infinitely repeated game $G^\infty$ under deterministic public monitoring. Action sets $(A_k)_{k\in N}$ are finite, payoffs are bounded, and all players discount future utilities with a common discount factor $\delta \in (0,1)$. Public histories are denoted by $y^t = (y_0,\dots,y_{t-1})$, and the public state $s_t$ summarizes on-chain information relevant for stage payoffs and continuation utilities. All incentive-relevant signals, including payment outcomes, delivery status, incentive triggers, and trust state updates, are emitted on-chain at the end of each period and are observable by all participants. We restrict attention to pure strategies.

\subsection{Players, Actions, Public State, and Public Signal}
Let $N = \mathcal{B} \cup \mathcal{M}$ denote the player set, consisting of buyers $i \in \mathcal{B}$ and merchants $j \in \mathcal{M}$. In each period $t$, buyers choose
\[
A_i = \{\text{PayOnTime},\ \text{LatePay},\ \text{Default}\},
\]
and merchants choose
\[
A_j = \{\text{DeliverOnTime},\ \text{LateDeliver},\ \text{FailToDeliver}\}.
\]
The joint action profile is $a_t \in A = \prod_{k\in N} A_k$.

The public state $s_t$ summarizes on-chain information used for stage payoffs and continuation utilities, including trust scores and aggregate system balances. After execution of $a_t$, the settlement layer emits a public signal $y_t \in \mathcal{Y}$ deterministically. The signal summarizes verifiable outcomes, including payment confirmations, delivery outcomes, incentive triggers, and trust score updates. Since $s_t$ and $y_t$ evolve deterministically under smart contract rules, players reconstruct public histories from observed signals.

\subsection{Solution Concept and Incentive Condition}
We characterize equilibrium behavior under PPE~\cite{fudenberg1994folk}, restricting attention to public pure strategies that condition on public signal histories.
Off-chain actions are not directly observable, but incentive-relevant outcomes are publicly verifiable at epoch boundaries via on-chain settlement signals.
Under this public monitoring structure, equilibrium verification follows the OSDP.
The deviation gain and the discounted continuation loss, arising from protocol-enforced penalties and the loss of future participation, are instantiated using the protocol’s reward, penalty, and cost components. Satisfaction of the resulting OSDP inequality is sufficient for sustaining PPE under public monitoring~\cite{mailath2006repeated}. 

In this work, time is indexed by periods $t=0,1,\dots$, where each period corresponds to a protocol epoch, $\tau_{\text{epoch}} = 4$ hours. Micropayments may occur off-chain within an epoch, while settlement and incentive distribution occur atomically at the epoch boundary. A participant is conforming in period $t$ only if all obligations accrued within that epoch are fulfilled. Stage payoffs in subsequent sections represent epoch-level aggregate utility.

\section{Merchant Incentive and Equilibrium Analysis}
\label{sect:micropayment_merchant_incentives}
This section analyzes merchant participation incentives and characterizes system liveness on the supply side. Using the repeated-game framework defined in Section~\ref{sec:repeated_game_framework}, we derive the IC and individual rationality (IR) constraints under which rational merchants maintain availability and comply with protocol deadlines. In high-frequency micropayment settings, such as a point-of-sale payment node, merchant $j$ acts as a payment service provider responsible for execution reliability and liquidity management. Execution reliability requires that accepted transactions are completed within protocol deadlines, while liquidity management concerns capital allocation under opportunity cost constraints. The analysis identifies IC and IR conditions under which merchant availability is sustained, assuming functional infrastructure.

\subsubsection{Stage-Game Utility Functions}
\label{sec:merchant_utility_functions}

This subsection defines the utility functions associated with the merchant action set $A_j$. For analysis, the actions $\textit{DeliverOnTime}$, $\textit{LateDeliver}$, and $\textit{FailToDeliver}$ are denoted by $a_{conf}$, $a_{late}$, and $a_{def}$.

\begin{itemize}
    \item $a_{conf}$ (\textbf{Conform}): The merchant completes delivery within the protocol-defined deadline $\Delta t$.
    \item $a_{late}$ (\textbf{Delay}): The merchant completes delivery after the deadline, retains control over transaction-related resources during the delay, and incurs a late-delivery penalty $P_j^{LM}$.
    \item $a_{def}$ (\textbf{Default}): The merchant fails to deliver and exits the protocol, retaining transaction value while triggering confiscation of staked collateral.
\end{itemize}

Let $\mathcal{T}$ denote the set of transactions processed in the current settlement epoch, indexed by $k$, with payment amount $v_k$. For merchant $j$, let $u_j(a)$ denote the aggregate utility induced by action $a$, defined as the sum of payoffs over all transactions in $\mathcal{T}$. Utilities are evaluated at the epoch boundary, consistent with the protocol’s epoch-based settlement model.\footnote{Although stake release is subject to a delay, incentive constraints are evaluated per epoch.
Delays and defaults are detected and penalized at epoch boundaries, which bounds deviation payoffs by single-epoch transaction volume.
Accordingly, the required stake $S_j^M$ covers single-epoch exposure.} 

For $a_{late}$, incentives are evaluated under the worst-case deviation in which all transactions in $\mathcal{T}$ are delayed. The epoch-level utility function is
\begin{equation}
\label{eq:micropayment_merchant_payoffs_final}
u_j(a) =
\begin{cases}
\begin{aligned}
    & \sum_{k \in \mathcal{T}} (v_k - C_{j,k}^{fee} - C_{j,k}^{exec} \\
    & \quad + R_{j,k}^{FM}) + R_j^{SM} - C_j^{stake},
\end{aligned}
& \text{if } a = a_{conf}; \\[20pt]

\begin{aligned}
    & \sum_{k \in \mathcal{T}} \big( v_k - C_{j,k}^{fee} - C_{j,k}^{exec} \\
    & \quad + R_{j,k}^{FM} + \Psi_{j,k} - P_{j,k}^{LM} \big) \\
    & \quad + R_j^{SM} - C_j^{stake},
\end{aligned}
& \text{if } a = a_{late}; \\[20pt]

\begin{aligned}
    & \sum_{k \in \mathcal{T}} (v_k + \Psi_{j,k} - P_{j,k}^{DM}) \\
    & \quad - C_j^{stake},
\end{aligned}
& \text{if } a = a_{def}.
\end{cases}
\end{equation}

$C_{j,k}^{fee}$ and $C_{j,k}^{exec}$ denote the per-transaction protocol fee and execution cost incurred by merchant $j$, respectively.
$R_{j,k}^{FM}$ is the transaction-level fee rebate granted upon successful delivery, derived from the aggregate epoch-level rebate $R_j^{FM}$ allocated to merchant $j$.
$R_j^{SM}$ denotes the fixed per-epoch reward associated with maintaining a compliant staking position.
$C_j^{stake}$ represents the opportunity cost of capital locked as stake over the epoch, modeled as the foregone return on staked capital.
$P_{j,k}^{LM}$ and $P_{j,k}^{DM}$ denote the protocol-enforced penalties for late delivery and default, respectively.
$\Psi_{j,k}$ denotes an upper bound on the external opportunity payoff obtainable from temporarily retaining control over transaction-related resources associated with transaction $k$, such as liquidity float.

\subsubsection{Incentive Compatibility Analysis}
\label{sec:ic_analysis}
We characterize the merchant’s IC constraints within a single settlement epoch. The analysis identifies conditions on $P_{j,k}^{LM}$ and the merchant stake $S_j^M$ under which no profitable deviation exists, even with heterogeneous external opportunity gains.

Two deviation classes are considered, delay $a_{late}$ and default $a_{def}$.
For each, we derive conditions enforcing the strict preference ordering
$a_{conf} \succ_j a_{late} \succ_j a_{def}$, where $\succ_j$ denotes strict preference
with respect to merchant $j$’s epoch-level utility.
Preventing $a_{late}$ ensures timely settlement, while preventing $a_{def}$ ensures that default does not become the best response.

\begin{proposition}[Dominance of Timely Delivery]
\label{prop:micropayment_dominance_of_timely_delivery}
If $P_{j,k}^{LM} > \Psi_{j,k}$ for all $k \in \mathcal{T}$, then
$a_{conf} \succ_j a_{late}$.
\end{proposition}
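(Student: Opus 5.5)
The argument is a direct comparison of the two branches of the epoch-level utility \eqref{eq:micropayment_merchant_payoffs_final}. No continuation-value or OSDP machinery is needed, because the claim concerns stage utilities only. I will form the difference $u_j(a_{conf}) - u_j(a_{late})$ and show it is strictly positive under the hypothesis $P_{j,k}^{LM} > \Psi_{j,k}$ for every $k \in \mathcal{T}$.

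\emph{Cancelling common terms.} Both branches contain the per-transaction terms $v_k - C_{j,k}^{fee} - C_{j,k}^{exec} + R_{j,k}^{FM}$ summed over the same set $\mathcal{T}$. Both also contain the epoch-level terms $R_j^{SM} - C_j^{stake}$. This relies on the modeling convention, implicit in \eqref{eq:micropayment_merchant_payoffs_final}, that a late but completed delivery keeps the rebate $R_{j,k}^{FM}$ and the staking reward $R_j^{SM}$, and is penalized only through $P_{j,k}^{LM}$. Since $\mathcal{T}$ is the same finite set in both branches (the worst-case deviation delays all transactions), these terms cancel. What remains is
\[
u_j(a_{conf}) - u_j(a_{late}) = \sum_{k\in\mathcal{T}} \bigl(P_{j,k}^{LM} - \Psi_{j,k}\bigr).
\]

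\emph{Signing the sum.} By hypothesis each summand is strictly positive, so the sum is strictly positive provided $\mathcal{T}$ is nonempty. Hence $u_j(a_{conf}) > u_j(a_{late})$, i.e., $a_{conf} \succ_j a_{late}$.

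The only delicate point is the degenerate case $\mathcal{T} = \emptyset$. There both utilities coincide and strict preference fails, so I would state explicitly that the epoch is assumed to contain at least one transaction. I would also remark on partial delay, where only a subset $\mathcal{T}' \subseteq \mathcal{T}$ is delayed. The same computation gives a difference of $\sum_{k\in\mathcal{T}'}(P_{j,k}^{LM} - \Psi_{j,k}) > 0$ whenever $\mathcal{T}'$ is nonempty. Thus the worst-case formulation loses no generality, and the per-transaction condition is exactly what is needed, even with heterogeneous $\Psi_{j,k}$.
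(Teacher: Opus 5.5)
Your proof is correct and is the same as the paper's: subtract the two branches of Eq.~\ref{eq:micropayment_merchant_payoffs_final}, cancel the common terms to leave $\sum_{k\in\mathcal{T}}(P_{j,k}^{LM}-\Psi_{j,k})$, and sign this sum using the hypothesis. Your observation that strict preference needs $\mathcal{T}\neq\emptyset$ is a valid refinement that the paper leaves implicit.
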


\begin{proof}
From Eq.~\ref{eq:micropayment_merchant_payoffs_final},
\begin{equation}
u_j(a_{conf}) - u_j(a_{late})
= \sum_{k \in \mathcal{T}} \left( P_{j,k}^{LM} - \Psi_{j,k} \right).
\end{equation}
If $P_{j,k}^{LM} > \Psi_{j,k}$ for all $k \in \mathcal{T}$, then
$u_j(a_{conf}) > u_j(a_{late})$, and hence $a_{conf} \succ_j a_{late}$.
\end{proof}

To preclude default ($a_{def}$) from becoming the best response, we impose the following assumptions.

\begin{assumption}[Bounded Aggregate Liability]
\label{ass:micropayment_bounded_execution}
For any merchant $j$ and epoch transaction set $\mathcal{T}$,
\begin{equation}
\sum_{k \in \mathcal{T}} C_{j,k}^{exec} \le C_{\max}.
\end{equation}
\end{assumption}

\begin{assumption}[Persistent Alternative Utility]
\label{ass:micropayment_persistent_opportunity}
The external opportunity gain $\Psi_{j,k}$ is modeled as an exogenous upper bound
on deviation payoffs and is not assumed to increase through the choice of deviation strategy.
\end{assumption}

\begin{theorem}[Settlement Liveness with Bounded Liability]
\label{thm:micropayment_liveness_bounded_execution}
Fix merchant $j$ satisfying Assumptions~\ref{ass:micropayment_bounded_execution}
and~\ref{ass:micropayment_persistent_opportunity}. Let $\mathcal{T}$ denote the set
of transactions in the current epoch. If all transactions in $\mathcal{T}$ have
passed their execution deadlines, then $a_{late} \succ_j a_{def}$ whenever
\begin{equation}
R_j^{SM} + \sum_{k \in \mathcal{T}} P_{j,k}^{DM}
>
C_{\max} + \sum_{k \in \mathcal{T}}
\left( C_{j,k}^{fee} + P_{j,k}^{LM} - R_{j,k}^{FM} \right).
\end{equation}
\end{theorem}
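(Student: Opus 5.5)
The plan is a direct comparison of the two epoch-level utilities in Eq.~\ref{eq:micropayment_merchant_payoffs_final}, in the same spirit as the proof of Proposition~\ref{prop:micropayment_dominance_of_timely_delivery}. The difference is then lower-bounded using Assumption~\ref{ass:micropayment_bounded_execution}. First I write out
\begin{equation*}
u_j(a_{late}) - u_j(a_{def})
= \sum_{k \in \mathcal{T}} \bigl( -C_{j,k}^{fee} - C_{j,k}^{exec} + R_{j,k}^{FM} - P_{j,k}^{LM} + P_{j,k}^{DM} \bigr) + R_j^{SM}.
\end{equation*}
The terms $v_k$, $\Psi_{j,k}$ and $C_j^{stake}$ appear identically in both branches and cancel. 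Assumption~\ref{ass:micropayment_persistent_opportunity} is what licenses this cancellation. It says $\Psi_{j,k}$ is an exogenous bound that does not depend on the deviation chosen, so the float retained under delay is the same quantity as the one retained under default. The hypothesis that all transactions in $\mathcal{T}$ have passed their deadlines is used at this point as well. It removes $a_{conf}$ from the comparison, and it justifies using the worst-case $a_{late}$ branch in which every $k \in \mathcal{T}$ is delayed and incurs $P_{j,k}^{LM}$. The residual choice is therefore exactly between these two branches.

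Next I bound the only uncontrolled term, the aggregate execution cost. By Assumption~\ref{ass:micropayment_bounded_execution}, $-\sum_{k} C_{j,k}^{exec} \ge -C_{\max}$. This gives
\begin{equation*}
u_j(a_{late}) - u_j(a_{def}) \ge R_j^{SM} + \sum_{k \in \mathcal{T}} P_{j,k}^{DM} - C_{\max} - \sum_{k \in \mathcal{T}} \bigl( C_{j,k}^{fee} + P_{j,k}^{LM} - R_{j,k}^{FM} \bigr).
\end{equation*}
Under the stated inequality the right-hand side is strictly positive. Hence $u_j(a_{late}) > u_j(a_{def})$, that is, $a_{late} \succ_j a_{def}$.

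The algebra itself is routine. The main step needing care is the first one: making precise that $\Psi_{j,k}$ and $C_j^{stake}$ are common to both deviations, so that they cancel rather than having to be bounded. I would state explicitly that, under Assumption~\ref{ass:micropayment_persistent_opportunity}, the same exogenous upper bound $\Psi_{j,k}$ is used in both the $a_{late}$ and $a_{def}$ branches. I would also state that the stake opportunity cost is incurred over the epoch regardless of the action taken. With these two points stated, the cancellation is exactly what Eq.~\ref{eq:micropayment_merchant_payoffs_final} gives, and the theorem follows.
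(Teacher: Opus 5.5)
Your proposal is correct and follows the paper's own proof. You cancel $v_k$, $\Psi_{j,k}$ and $C_j^{stake}$ in $u_j(a_{late})-u_j(a_{def})$ from Eq.~\ref{eq:micropayment_merchant_payoffs_final}, then use Assumption~\ref{ass:micropayment_bounded_execution} to replace $\sum_{k\in\mathcal{T}} C_{j,k}^{exec}$ by $C_{\max}$, which makes the stated inequality sufficient for strict preference. The cancellation of $\Psi_{j,k}$ already follows from the same symbol appearing in both branches of the payoff equation, so your appeal to Assumption~\ref{ass:micropayment_persistent_opportunity} is interpretive rather than load-bearing.
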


\begin{proof}
Comparing $u_j(a_{late})$ and $u_j(a_{def})$ using
Eq.~\ref{eq:micropayment_merchant_payoffs_final}, common terms cancel and
rearrangement yields
\[
R_j^{SM} + \sum_{k \in \mathcal{T}} P_{j,k}^{DM}
>
\sum_{k \in \mathcal{T}}
\left( C_{j,k}^{exec} + C_{j,k}^{fee} + P_{j,k}^{LM} - R_{j,k}^{FM} \right).
\]
By Assumption~\ref{ass:micropayment_bounded_execution},
$\sum_{k \in \mathcal{T}} C_{j,k}^{exec} \le C_{\max}$, which implies the stated
sufficient condition.
\end{proof}

\begin{corollary}[Strict Dominance of Conforming]
\label{cor:micropayment_strict_dominance_conforming}
Proposition~\ref{prop:micropayment_dominance_of_timely_delivery}
and Theorem~\ref{thm:micropayment_liveness_bounded_execution}
imply the strict dominance ordering
\begin{equation}
u_j(a_{conf}) > u_j(a_{late}) > u_j(a_{def}).
\end{equation}
\end{corollary}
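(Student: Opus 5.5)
The corollary follows by chaining the two pairwise comparisons already established, so the plan is to invoke each result under its own hypothesis and then combine them. I read the corollary as holding whenever both sets of hypotheses are in force: $P_{j,k}^{LM} > \Psi_{j,k}$ for every $k \in \mathcal{T}$, Assumptions~\ref{ass:micropayment_bounded_execution} and~\ref{ass:micropayment_persistent_opportunity}, the post-deadline condition on $\mathcal{T}$, and the strict inequality of Theorem~\ref{thm:micropayment_liveness_bounded_execution}. I would begin by stating this reading explicitly, since the corollary itself does not repeat these conditions.

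\emph{First step.} Under the per-transaction condition $P_{j,k}^{LM} > \Psi_{j,k}$, Proposition~\ref{prop:micropayment_dominance_of_timely_delivery} gives $u_j(a_{conf}) > u_j(a_{late})$. The reason is that the difference from Eq.~\ref{eq:micropayment_merchant_payoffs_final} is a sum of strictly positive terms. One caveat: if $\mathcal{T}$ were empty, this sum would vanish and strictness would fail. I would therefore note that $\mathcal{T}$ is assumed nonempty, which is implicit in the proposition.

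\emph{Second step.} Under the stated assumptions and the threshold inequality, Theorem~\ref{thm:micropayment_liveness_bounded_execution} gives $u_j(a_{late}) > u_j(a_{def})$. Its proof shows that the exact difference $u_j(a_{late}) - u_j(a_{def})$ equals
\[
R_j^{SM} + \sum_{k\in\mathcal{T}} P_{j,k}^{DM} - \sum_{k\in\mathcal{T}}\bigl(C_{j,k}^{exec} + C_{j,k}^{fee} + P_{j,k}^{LM} - R_{j,k}^{FM}\bigr).
\]
It then bounds $\sum_{k \in \mathcal{T}} C_{j,k}^{exec}$ above by $C_{\max}$, so the difference is strictly positive.

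\emph{Combination.} Both comparisons are strict inequalities between real numbers, namely the epoch-level utilities evaluated on the same transaction set $\mathcal{T}$. By transitivity of $>$ on $\mathbb{R}$, they combine into the chain $u_j(a_{conf}) > u_j(a_{late}) > u_j(a_{def})$. This also gives $u_j(a_{conf}) > u_j(a_{def})$ directly, so $a_{conf}$ strictly dominates both deviations.

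The only real obstacle is consistency of the comparison context. Proposition~\ref{prop:micropayment_dominance_of_timely_delivery} is stated for a generic epoch. Theorem~\ref{thm:micropayment_liveness_bounded_execution} is conditioned on all transactions having passed their deadlines, a situation in which $a_{conf}$ is arguably no longer available. I would handle this by evaluating all three utilities with the same formulas from Eq.~\ref{eq:micropayment_merchant_payoffs_final} on the same $\mathcal{T}$, which are defined independently of timing. The deadline clause only motivates why $a_{late}$ versus $a_{def}$ is the relevant comparison and does not alter the payoff expressions, so the two inequalities refer to the same three numbers and chaining them is legitimate.
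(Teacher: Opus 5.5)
Your proposal is correct and follows the paper's proof: Proposition~\ref{prop:micropayment_dominance_of_timely_delivery} gives $u_j(a_{conf}) > u_j(a_{late})$, Theorem~\ref{thm:micropayment_liveness_bounded_execution} gives $u_j(a_{late}) > u_j(a_{def})$, and transitivity yields the chain. Your two caveats are sound points the paper leaves implicit: strictness of the first inequality needs $\mathcal{T}$ nonempty, and the post-deadline clause has to be read as not changing the payoff formulas, so both inequalities compare the same three numbers.
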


\begin{proof}
Proposition~\ref{prop:micropayment_dominance_of_timely_delivery} establishes
$u_j(a_{conf}) > u_j(a_{late})$, while
Theorem~\ref{thm:micropayment_liveness_bounded_execution} establishes
$u_j(a_{late}) > u_j(a_{def})$.
The strict ordering follows by transitivity; see Appendix~\ref{app:micropayment_incentive_proofs}.
\end{proof}

\subsubsection{Individual Rationality Analysis}
We characterize the IR condition required for merchant participation. IR requires that the expected utility from conforming behavior weakly dominates non-participation. Normalizing the non-participation option to zero yields the following necessary condition.

\begin{proposition}[Stake-Dependent Participation Constraint]
\label{prop:micropayment_participation_constraint}
A necessary condition for a merchant $j$ to satisfy IR is that
\begin{equation}
\label{eq:micropayment_ir_constraint_of_merchant}
\sum_{k \in \mathcal{T}} R_{j,k}^{FM} + R_j^{SM}
\;\ge\;
\sum_{k \in \mathcal{T}} \left( C_{j,k}^{exec} + C_{j,k}^{fee} \right)
+ C_j^{stake}.
\end{equation}
\end{proposition}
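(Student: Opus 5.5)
The plan is to read IR as the requirement that the per-epoch utility from conforming, measured relative to the normalized outside option of zero, be nonnegative, and to obtain \eqref{eq:micropayment_ir_constraint_of_merchant} by rearranging this inequality using Eq.~\ref{eq:micropayment_merchant_payoffs_final}. First I will state IR in the repeated setting. Along the conforming path, the public state is stationary: there are no penalties and no trust degradation. The protocol-induced stage payoff is therefore the same in every epoch. It follows that the discounted participation value is $u_j(a_{conf})/(1-\delta)$. Since $\delta\in(0,1)$, this value is nonnegative if and only if the stage quantity is nonnegative. This reduces the repeated-game IR condition to a single-epoch condition, in line with the footnote stating that incentive constraints are evaluated per epoch.

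Next I will handle the transaction-value terms $v_k$, which appear in $u_j(a_{conf})$ but not in \eqref{eq:micropayment_ir_constraint_of_merchant}. I expect this to be the main obstacle, because the statement implicitly assumes the interpretation below and I would have to make it explicit. The approach I would try first is to treat $v_k$ as a pass-through payment, i.e.\ exact compensation for the good or service delivered under transaction $k$. The merchant would also obtain this amount by selling through its outside channel, so the non-participation benchmark already includes $\sum_{k\in\mathcal{T}} v_k$. Normalizing non-participation to zero then means subtracting $\sum_k v_k$ from $u_j(a_{conf})$. Equivalently, IR is imposed on the protocol-induced surplus
\[
\sum_{k \in \mathcal{T}} \bigl(R_{j,k}^{FM} - C_{j,k}^{fee} - C_{j,k}^{exec}\bigr) + R_j^{SM} - C_j^{stake} \;\ge\; 0 .
\]

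Finally, I will rearrange this surplus inequality, which yields \eqref{eq:micropayment_ir_constraint_of_merchant} directly. Necessity follows by contraposition. If the inequality fails, the conforming stage surplus is strictly negative, so the stationary continuation value of participation is strictly below zero. A rational merchant then strictly prefers non-participation, and IR is violated.

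The condition is only necessary, not sufficient, for sustaining the conforming equilibrium. It does not use Corollary~\ref{cor:micropayment_strict_dominance_conforming}; that corollary is what ensures that, once the merchant participates, conforming rather than deviating is the relevant branch of Eq.~\ref{eq:micropayment_merchant_payoffs_final} to evaluate.
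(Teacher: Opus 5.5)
Your proposal is correct and is essentially the paper's argument: impose IR as nonnegativity of the conforming epoch utility against the zero-normalized outside option, then rearrange. Your explicit netting of $\sum_{k\in\mathcal{T}} v_k$ as a pass-through payment makes precise a step the paper performs silently, since its displayed $u_j(a_{conf})$ simply omits the $v_k$ terms. The detour through a stationary discounted value $u_j(a_{conf})/(1-\delta)$ is unnecessary, because the paper imposes IR directly at the epoch level. That detour also relies on an unstated assumption that $\mathcal{T}$ and the reward terms are identical across epochs, so you may as well drop it.
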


\begin{proof}
IR requires that the realized utility from conforming behavior
weakly dominates non-participation. From
Equation~\ref{eq:micropayment_merchant_payoffs_final},
\[
u_j(a_{conf}) =
\sum_{k \in \mathcal{T}} \left( R_{j,k}^{FM}
- C_{j,k}^{exec} - C_{j,k}^{fee} \right)
+ R_j^{SM} - C_j^{stake}.
\]
Rearranging terms yields the stated condition.
\end{proof}

\subsubsection{Long-Run Equilibrium Analysis}
We extend the analysis to an infinitely repeated interaction under public monitoring.
The repeated-game abstraction omits post-deviation dynamics, yielding a conservative estimate of the loss in continuation value, as restarting the interaction forfeits accumulated reputation and future incentive benefits. 

\noindent\textbf{Utility under Punishment.}
During punishment, protocol-defined incentive rewards are suspended.
Accordingly, the punishment utility corresponds to the conforming utility
$u_j(a_{conf})$ with the reward components removed:
\begin{equation}
u_j^{\mathcal{P}}(a_{conf}, s)
=
\sum_{k \in \mathcal{T}(s)} \left( v_k - C_{j,k}^{fee} - C_{j,k}^{exec} \right)
- C_j^{stake}.
\end{equation}

\noindent\textbf{Discounted Loss under Incentive Suspension.}
We characterize the discounted utility loss induced by incentive suspension over a
finite punishment horizon.

\begin{lemma}[Suspension Loss Bound]
\label{lem:micropayment_future_loss}
For a punishment duration $T$ and discount factor $\delta \in [0,1)$,
the discounted continuation loss satisfies
\begin{equation}
\Delta u_j^{\mathrm{loss}}
\;\ge\;
\left(\frac{1-\delta^T}{1-\delta}\right)\underline{\ell}_j .
\end{equation}
\end{lemma}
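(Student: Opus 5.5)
The plan is to define the per-epoch loss directly and sum a geometric series. During punishment, the merchant's stage utility is $u_j^{\mathcal{P}}(a_{conf},s)$ rather than $u_j(a_{conf})$. By the two displays above, the difference at each punished epoch $\tau$ is exactly the suspended reward components,
\[
\ell_{j,\tau} \;=\; u_j(a_{conf},s_\tau)-u_j^{\mathcal{P}}(a_{conf},s_\tau)
\;=\; \sum_{k\in\mathcal{T}(s_\tau)} R_{j,k}^{FM} + R_j^{SM}.
\]
I will take $\underline{\ell}_j$ to be a uniform lower bound on this per-epoch loss over all public states reachable during punishment, $\underline{\ell}_j := \inf_{s}\ell_j(s)$. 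At minimum this is $\underline{\ell}_j \ge R_j^{SM} \ge 0$, since the rebates are nonnegative and the fixed staking reward $R_j^{SM}$ does not depend on the transaction set.

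Next I write the discounted continuation loss. The punishment phase has fixed duration $T$ epochs, starting in the epoch after the deviation is detected at the epoch boundary. Normalizing to that starting epoch,
\[
\Delta u_j^{\mathrm{loss}} = \sum_{\tau=0}^{T-1}\delta^{\tau}\,\ell_{j,\tau}.
\]
Each term satisfies $\ell_{j,\tau}\ge \underline{\ell}_j$ and $\delta^\tau\ge 0$, so
\[
\Delta u_j^{\mathrm{loss}} \ge \underline{\ell}_j\sum_{\tau=0}^{T-1}\delta^\tau = \underline{\ell}_j\,\frac{1-\delta^T}{1-\delta},
\]
using the finite geometric sum, which is valid for $\delta\in[0,1)$. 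If the punishment is counted from one epoch later, the bound is multiplied by a factor $\delta$. I would absorb that factor by indexing the continuation from the first punished epoch, consistent with the OSDP comparison.

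Any additional losses after punishment ends come from the recovery phase, where trust regenerates at a limited rate and credit stays contracted. These losses are nonnegative, and the paper's abstraction explicitly omits post-deviation dynamics, so dropping them only weakens the bound. This is why the statement is an inequality rather than an equality.

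The main obstacle is the definition of $\underline{\ell}_j$. The state during punishment, and hence the transaction set $\mathcal{T}(s)$, may differ from the conforming path. So I must fix $\underline{\ell}_j$ as an infimum over those states rather than the realized conforming reward, and note that it is well defined because payoffs are bounded by the standing assumptions.
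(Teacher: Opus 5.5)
Your proposal is correct and follows essentially the same route as the paper. Both arguments identify the per-epoch punishment loss as the suspended rewards $\sum_{k\in\mathcal{T}(s)} R_{j,k}^{FM} + R_j^{SM}$, bound it below uniformly by $\underline{\ell}_j$, and sum the finite geometric series over the $T$ punished epochs; your $\sum_{\tau=0}^{T-1}\delta^{\tau}$ is the same as the paper's $\sum_{\tau=1}^{T}\delta^{\tau-1}$, and your termwise inequality is slightly cleaner than the paper's step, which writes the discounted sum of $\underline{\ell}_j$ as an equality for $\Delta u_j^{\mathrm{loss}}$.
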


\noindent
The constant $\underline{\ell}_j > 0$ denotes a lower bound on the per-epoch utility loss
incurred by merchant $j$ due to the suspension of protocol-defined incentive rewards.

\begin{proof}
The bound follows from substituting the per-epoch loss into the discounted sum
and evaluating a finite geometric series.
A detailed derivation is provided in Appendix~\ref{app:micropayment_incentive_proofs}.
\end{proof}

This strict ordering over $\{a_{conf}, a_{late}, a_{def}\}$ implies that any one-shot
deviation is strictly dominated in the stage game. We obtain the following result.

\begin{theorem}[Perfect Public Equilibrium]
\label{thm:micropayment_merchant_ppe}
Under the strict stage-game ordering in
Corollary~\ref{cor:micropayment_strict_dominance_conforming}
and the continuation loss bound in
Lemma~\ref{lem:micropayment_future_loss},
the public strategy profile that prescribes $a_{conf}$
on the conforming path constitutes a PPE
for all $\delta \in [0,1)$.
\end{theorem}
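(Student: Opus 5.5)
The argument goes through the one-shot deviation principle for public strategies under deterministic public monitoring, as stated in the Solution Concept subsection. I would first pin down the strategy profile precisely. On the conforming path, the merchant plays $a_{conf}$. After any public signal $y_t$ revealing $a_{late}$ or $a_{def}$, the profile moves to the punishment phase of duration $T$, during which rewards are suspended and the merchant still plays $a_{conf}$ with stage utility $u_j^{\mathcal{P}}$; it then returns to the conforming path. Since $y_t$ is a deterministic function of $a_t$ and the strategy conditions only on $y^t$, every public history falls into one of finitely many public states: conforming, or punishment with $r$ epochs remaining. By the OSDP it therefore suffices to check, at each such state, that no single-period deviation followed by reversion to the prescribed play is profitable.

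At a conforming-path state, a one-shot deviation to $a\in\{a_{late},a_{def}\}$ changes payoffs in two places. The current stage payoff changes by $u_j(a)-u_j(a_{conf})$, which is strictly negative by Corollary~\ref{cor:micropayment_strict_dominance_conforming}. The continuation changes by $-\delta\,\Delta u_j^{\mathrm{loss}}$, which is nonpositive by Lemma~\ref{lem:micropayment_future_loss} because $\underline{\ell}_j>0$. For $a_{def}$ the merchant exits, so the loss of continuation is at least as large. The total deviation gain is therefore strictly negative, and this holds for every $\delta\in[0,1)$: the stage inequality alone already rules out deviation, and the continuation term only reinforces it. This is the reason the result carries no lower bound on $\delta$.

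At a punishment-phase state, I would argue that the same stage ordering still holds under the suspended-reward payoffs. Removing $R^{FM}$ and $R^{SM}$ from every branch leaves $u^{\mathcal{P}}(a_{conf})-u^{\mathcal{P}}(a_{late})=\sum_k(P^{LM}_{j,k}-\Psi_{j,k})>0$ by Proposition~\ref{prop:micropayment_dominance_of_timely_delivery}. Deviating during punishment also restarts or extends the punishment, so the continuation term is again nonpositive.

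I expect the main obstacle to be default during punishment. With the rewards gone, the condition in Theorem~\ref{thm:micropayment_liveness_bounded_execution} loses its $R_j^{SM}$ and rebate terms, so $u^{\mathcal{P}}(a_{conf})>u^{\mathcal{P}}(a_{def})$ needs $\sum_k P^{DM}_{j,k}$ (stake confiscation) to exceed $\sum_k(C^{fee}_{j,k}+C^{exec}_{j,k}+\Psi_{j,k})$. To close this I would first try a monotonicity argument. Rewards enter only the non-default branches, so if the theorem's condition holds with slack, I would check whether $\sum_k P^{DM}_{j,k}\ge R_j^{SM}+\sum_k R^{FM}_{j,k}$ plus the required margin, interpreting the hypotheses of the statement as imposing the ordering in every public state. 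Failing that, I would state this interpretation explicitly as the meaning of ``strict stage-game ordering'' in the hypothesis.

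Combining the two cases, the OSDP verifies the profile as a PPE for all $\delta\in[0,1)$.
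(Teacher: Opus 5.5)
Your proposal is correct and follows the same route as the paper. You apply the OSDP at each public state, use Corollary~\ref{cor:micropayment_strict_dominance_conforming} to make the immediate gain from any deviation strictly negative, and use Lemma~\ref{lem:micropayment_future_loss} to make the continuation change nonpositive, which is why no lower bound on $\delta$ appears. The punishment-state issue you flag is real: once $R_j^{SM}$ and $R_{j,k}^{FM}$ are removed, the stated conditions do not imply $\sum_k P^{DM}_{j,k} > \sum_k (C^{fee}_{j,k}+C^{exec}_{j,k}+\Psi_{j,k})$, so your monotonicity attempt would not close it. The paper, however, resolves it exactly as your fallback does, by invoking the Corollary ``at an arbitrary public history and the induced public state,'' which amounts to reading the hypothesized strict ordering as holding in every public state.
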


\begin{proof}
The result follows from the One-Shot Deviation Principle
under public monitoring, combined with the stage-game ordering
and the continuation loss bound.
A complete proof is provided in Appendix~\ref{app:micropayment_incentive_proofs}.
\end{proof}

\subsubsection{Summary of Merchant Incentive Analysis}
\label{sec:merchant_incentive_summary}
Under the stated late-penalty condition, timely delivery strictly dominates delay. With bounded liability and collateral, delay strictly dominates default. Consequently, conforming behavior is the unique strictly dominant action in the stage game. This dominance extends to the infinite-horizon setting under public monitoring,
yielding a PPE for all discount factors.

\section{Buyer Incentive and Equilibrium Analysis}
\label{sect:micropayment_buyer_incentives}
This section analyzes buyer behavior under the protocol’s credit allocation rules.
The analysis differs from the merchant case in Section~\ref{sect:micropayment_merchant_incentives}.
While merchant compliance is enforced by strict stage-game dominance that holds for all discount factors,
the buyer protocol operates in an under-collateralized regime designed for capital efficiency.
To support high-frequency transactions, the system permits credit expansion, allowing the buyer’s
outstanding payment obligation $v_{i,t}$ to exceed the posted collateral.

In this regime, timely repayment is not a stage-game dominant strategy.
The one-shot gain from default can exceed the recoverable collateral, so static penalties alone
are insufficient to deter deviation.
Buyer incentives therefore rely on the loss of future credit access and the associated
foregone trade utility, which disciplines current behavior. We model the interaction as an infinitely repeated game with public monitoring and apply the One-Shot Deviation Principle to characterize IC and equilibrium existence
under credit expansion.

\subsubsection{Stage-Game Utility Functions}
\label{sec:buyer_stage_game_utility}

We define the buyer’s stage-game utility over a settlement epoch.
The actions \textit{PayOnTime}, \textit{LatePay}, and \textit{Default} are denoted by
$a_{conf}$, $a_{late}$, and $a_{def}$.

\begin{itemize}
    \item $a_{conf}$ (\textbf{Conform}): The buyer settles the full payment obligation within the settlement window.
    \item $a_{late}$ (\textbf{Delay}): The buyer delays payment within the cure period, retaining control
    over the payment principal while incurring late-payment penalties.
    \item \(a_{def}\) (Default): The buyer fails to repay by the end of the cure period, resulting in collateral confiscation and loss of future credit access under Assumption~\ref{assump:micropayment_buyer_grim_trigger}.
\end{itemize}

Let $\mathcal{T}$ denote the set of transactions settled in the current epoch, indexed by $k$.
For buyer $i$, let $w_k$ denote the service value and $v_k$ the payment amount of transaction $k$.
The term $\omega$ denotes the buyer’s monetary conversion factor that maps changes in
credit capacity into monetary-equivalent utility.
Accordingly, credit-limit rewards and penalties enter the stage utility through
multiplication by~$\omega$.
The buyer’s utility $u_i(a)$ is given by:

\begin{equation}
\label{eq:buyer_stage_game_utility}
u_i(a) =
\begin{cases}
\begin{aligned}
    & \sum_{k \in \mathcal{T}} (w_k - v_k) + R_i^{TR} \\
    & \quad + R_i^{SB} - C_i^{stake} + \omega \cdot R_i^{CB},
\end{aligned}
& \text{if } a = a_{conf}; \\[20pt]

\begin{aligned}
    & \sum_{k \in \mathcal{T}} \big( w_k - v_k + \Psi_{i,k} - P_{i,k}^{LB} \big) \\
    & \quad + R_i^{SB} - C_i^{fin} - C_i^{stake} \\
    & \quad - \omega \cdot P_i^{CB},
\end{aligned}
& \text{if } a = a_{late}; \\[20pt]

\begin{aligned}
    & \sum_{k \in \mathcal{T}} (w_k + \Psi_{i,k}) - P_i^{DB} \\
    & \quad - C_i^{stake} - \omega \cdot CL_i,
\end{aligned}
& \text{if } a = a_{def}.
\end{cases}
\end{equation}

\noindent
$\Psi_{i,k}$ denotes an upper bound on the buyer’s short-term benefit from retaining
control over the payment principal $v_k$ during the delay period.
$P_{i,k}^{LB}$ denotes the late-payment penalty associated with transaction $k$, while
$P_i^{DB}=S_i^B$ denotes the default penalty given by full confiscation of the buyer’s
staked collateral $S_i^B$.
$C_i^{\text{stake}}$ represents the opportunity cost of staking, and $C_i^{\text{fin}}$
denotes the financing cost incurred under delayed settlement.
$R_i^{TR}$ denotes a protocol-defined transaction-rebate reward, granted upon timely repayment and included as an auxiliary incentive instrument, and $R_i^{SB}$ denotes the stake-interest reward
earned within the epoch.
$R_i^{CB}$ denotes the credit-limit reward given by the increase in credit capacity under
timely settlement, and $P_i^{CB}$ the credit-limit penalty given by the contraction in
credit capacity under deviation.
$CL_i$ denotes the buyer’s credit limit at the beginning of the epoch.

\subsubsection{Stage-Game Incentive Compatibility}
\label{sec:buyer_stage_game_ic}

We first analyze incentives to deter strategic delay.
The following result establishes a sufficient condition under which timely repayment
strictly dominates delayed repayment within a single epoch.

\begin{proposition}[Dominance of Timely Repayment]
\label{prop:micropayment_timely_settlement_buyer}
For any buyer $i$, if the late-payment penalty satisfies
$P_{i,k}^{LB} > \Psi_{i,k}$ for all $k \in \mathcal{T}$,
then $a_{conf}$ strictly dominates $a_{late}$ at the stage-game level.
\end{proposition}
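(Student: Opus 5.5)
The plan mirrors the proof of Proposition~\ref{prop:micropayment_dominance_of_timely_delivery}: take the difference of the first two branches of Eq.~\ref{eq:buyer_stage_game_utility} and show it is strictly positive. Subtracting $u_i(a_{late})$ from $u_i(a_{conf})$, the terms common to both branches cancel. These are $\sum_{k\in\mathcal{T}}(w_k - v_k)$, $R_i^{SB}$ and $C_i^{stake}$. What remains is
\begin{equation*}
u_i(a_{conf}) - u_i(a_{late})
= \sum_{k \in \mathcal{T}} \left( P_{i,k}^{LB} - \Psi_{i,k} \right)
+ R_i^{TR} + C_i^{fin} + \omega \left( R_i^{CB} + P_i^{CB} \right).
\end{equation*}

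Unlike the merchant case, the buyer difference contains extra terms beyond the penalty--opportunity gap. The second step is therefore to sign each extra term using the interpretations given after Eq.~\ref{eq:buyer_stage_game_utility}:
\begin{itemize}
\item $R_i^{TR}\ge 0$, because it is a rebate granted upon timely repayment.
\item $C_i^{fin}\ge 0$, because it is a financing cost.
\item $R_i^{CB}\ge 0$ and $P_i^{CB}\ge 0$, because they are, respectively, an increase and a contraction in credit capacity, measured as magnitudes.
\item $\omega\ge 0$, because it is a monetary conversion factor for credit capacity, which the buyer values.
\end{itemize}
Hence the last three terms of the display are nonnegative. This step is the main (mild) obstacle. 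The model never states these sign conventions formally, so I would make them explicit in the proof, either as a one-line standing convention or by pointing to their definitions as rewards, costs and penalties.

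Finally, the hypothesis $P_{i,k}^{LB} > \Psi_{i,k}$ makes every summand of the first sum strictly positive. Provided $\mathcal{T}$ is nonempty, the sum is strictly positive, which gives $u_i(a_{conf}) > u_i(a_{late})$. Nonemptiness is implicit, since otherwise there is no obligation to delay. I would note in passing that the extra terms only reinforce the inequality, so the condition is sufficient but conservative. Because the comparison holds at the epoch-level stage utility regardless of history, $a_{conf}$ strictly dominates $a_{late}$ at the stage-game level.
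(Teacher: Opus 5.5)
Your proposal is correct and follows essentially the same route as the paper's appendix proof: the same utility gap with residual terms $R_i^{TR}+C_i^{fin}+\omega(R_i^{CB}+P_i^{CB})$, those terms signed nonnegative by the same conventions, and strictness supplied by $\sum_{k\in\mathcal{T}}(P_{i,k}^{LB}-\Psi_{i,k})>0$. Your remark that $\mathcal{T}$ must be nonempty is a small point the paper leaves implicit.
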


\begin{proof}
From Eq.~\ref{eq:buyer_stage_game_utility}, the utility gap
$u_i(a_{conf}) - u_i(a_{late})$ cancels common terms and reduces to
$\sum_{k \in \mathcal{T}} (P_{i,k}^{LB} - \Psi_{i,k})$ plus weakly non-negative terms,
including $R_i^{TR}$, $C_i^{\text{fin}}$, and $\omega ( R_i^{CB} + P_i^{CB} )$.
Under $P_{i,k}^{LB} > \Psi_{i,k}$ for all $k \in \mathcal{T}$, this gap is strictly positive,
so $a_{conf} \succ a_{late}$ in the stage game.
A complete proof is provided in Appendix~\ref{app:buyer_proofs}.
\end{proof}

This condition eliminates strategic delay but does not preclude default under credit expansion.
We therefore turn to repeated-game incentives.

\subsubsection{Repeated-Game Incentives and Default Deterrence}

We impose the following protocol-level assumptions governing settlement timing,
credit exposure, and default handling.

\begin{assumption}[Explicit Settlement Timeline]
\label{assump:buyer_timeline}
Repayment obligations are strictly bound by the epoch boundary $t_{due}$ and a fixed cure
period $\Delta T_{cure}$, beyond which any outstanding obligation is treated as default.
\end{assumption}

\begin{assumption}[Bounded Aggregate Exposure]
\label{assump:micropayment_buyer_bounded_credit}
For any buyer $i$ and any epoch, the aggregate outstanding debt,
defined as the sum of unpaid transaction values within the epoch,
satisfies $v_{i,t} \le v_{\max} < \infty$.
\end{assumption}

\begin{assumption}[Algorithmic Default Enforcement]
\label{assump:micropayment_buyer_grim_trigger}
If buyer \(i\) fails to repay \(v_{i,t}\) by \(t_{\mathrm{due}}+\Delta T_{\mathrm{cure}}\), the protocol confiscates collateral \(S_i^B\) and maps the default event into a loss of future credit access. This loss may be implemented through account suspension, credential revocation, restitution requirements, trust reset, or conservative credit rebuilding for re-entering identities. In the analytical model, this continuation-value loss is represented by setting the buyer's post-default credit continuation value to zero. Thus, the assumption characterizes a frictional re-entry regime rather than requiring that permissionless blockchain addresses be physically permanent.
\end{assumption}

We now characterize equilibrium existence under credit expansion.

\begin{theorem}[Conforming PPE with Credit Expansion]
\label{thm:micropayment_conforming_ppe_credit_buyer}
Consider the infinitely repeated game with discount factor $\delta \in (0,1)$
and bounded credit exposure $v_{i,t} \le v_{\max}$, which bounds the buyer’s
maximum one-shot deviation payoff.
There exists a threshold
\begin{equation}
\underline{\delta}
=
\frac{v_{\max} - S_i^B}{v_{\max} - S_i^B + \bar{u}_i}
\end{equation}
such that for all $\delta \ge \underline{\delta}$,
a PPE exists in which the buyer conforms in every epoch.
\end{theorem}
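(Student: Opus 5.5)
The plan is to construct a grim-trigger public strategy and verify it with the One-Shot Deviation Principle, as in the merchant case. The candidate profile has the buyer play $a_{conf}$ after any public history with no default signal. After a public default signal, the buyer is permanently excluded from credit access, following Assumption~\ref{assump:micropayment_buyer_grim_trigger}, with post-default credit continuation value set to zero. We take $\bar{u}_i$ to be the per-epoch utility the buyer obtains on the conforming path, i.e.\ $u_i(a_{conf})$ from Eq.~\ref{eq:buyer_stage_game_utility}, normalized relative to the outside option. Because monitoring is deterministic and public, it suffices to check, at every public history, that no single deviation followed by reversion to the prescribed strategy is profitable. Off the path, the post-default phase gives the buyer no credit to exploit, so continuing to play the prescribed action there is trivially a best response. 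The binding check is therefore on the conforming path.

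On the conforming path there are two one-shot deviations. The deviation to $a_{late}$ is handled directly by Proposition~\ref{prop:micropayment_timely_settlement_buyer}. Under $P_{i,k}^{LB} > \Psi_{i,k}$, delay is strictly worse in the stage game. Delay also triggers the credit contraction $P_i^{CB}$ and so weakly lowers continuation value. It is therefore unprofitable for every $\delta$.

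The deviation to $a_{def}$ is the main step. By Assumption~\ref{assump:micropayment_buyer_bounded_credit}, the buyer's immediate gain from default is at most the retained debt $v_{i,t} \le v_{\max}$ net of the confiscated collateral $P_i^{DB} = S_i^B$, so the one-shot net gain is bounded by $v_{\max} - S_i^B$. The other transaction-level terms are absorbed into this bound or are the same under both actions. In normalized (average) payoffs, the OSDP comparison reads
\[
(1-\delta)\,\bar{u}_i + \delta\,\bar{u}_i \;\ge\; (1-\delta)\big(\bar{u}_i + v_{\max} - S_i^B\big) + \delta \cdot 0 .
\]
This rearranges to $\delta\,\bar{u}_i \ge (1-\delta)(v_{\max} - S_i^B)$, which is equivalent to
\[
\delta \;\ge\; \frac{v_{\max} - S_i^B}{v_{\max} - S_i^B + \bar{u}_i} = \underline{\delta}.
\]
The inequality is monotone in $\delta$, so it holds for all $\delta \ge \underline{\delta}$. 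Combining this with the delay case, no one-shot deviation is profitable, and the OSDP under public monitoring yields a PPE in which the buyer conforms every epoch.

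The main obstacle is making the payoff accounting for default consistent with Eq.~\ref{eq:buyer_stage_game_utility}. The default utility already contains the term $-\omega\,CL_i$, which encodes lost credit. I would treat this as part of the continuation loss rather than the stage gain, to avoid double counting. I would then show that the stage gap $u_i(a_{def}) - u_i(a_{conf})$ is bounded above by $v_{\max} - S_i^B$ once the terms $\Psi_{i,k}$ are absorbed into the exposure bound. I would also note the sign conventions. If $v_{\max} \le S_i^B$, then $\underline{\delta} \le 0$ and the result reduces to stage-game dominance, the fully collateralized case. The substantive credit-expansion regime is $v_{\max} > S_i^B$, where $\underline{\delta} \in (0,1)$ provided $\bar{u}_i > 0$. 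That positivity follows from the buyer's IR condition, which I would invoke explicitly.
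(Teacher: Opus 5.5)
Your proposal is correct and follows essentially the same route as the paper. Both use a grim-trigger public profile verified by the OSDP, with the one-shot default gain bounded by $v_{\max}-S_i^B$ and weighed against the forfeited continuation value $\delta\bar{u}_i/(1-\delta)$; your normalized inequality is the paper's multiplied by $1-\delta$. The paper's profile also specifies the merchants' strategies (serve on clean histories, refuse after default, since serving a defaulted buyer yields no payment and positive execution cost), which you should include for a complete PPE. In the other direction, your explicit check of the delay deviation via Proposition~\ref{prop:micropayment_timely_settlement_buyer} covers a case the paper's proof omits, though it adds the hypothesis $P_{i,k}^{LB}>\Psi_{i,k}$, which the theorem does not state.
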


\begin{proof}
The equilibrium construction and incentive verification follow from the OSDP. A complete proof is provided in Appendix~\ref{app:buyer_proofs}.
\end{proof}

Theorem 3 should therefore be read as a characterization of the identity-friction regime under which buyer-side credit deterrence is meaningful. The permanent-exclusion language in the repeated-game construction represents the economic loss of future credit capacity, not a claim that a permissionless address cannot be recreated. The same incentive logic applies when a defaulting buyer cannot immediately regain the same credit capacity because re-entry requires a new identity credential, restitution, trust rebuilding, or a period of low initial credit limits. If identity reset were costless and immediately restored the same credit capacity, the continuation-value loss would collapse and the buyer-side incentive region would change. Such a setting is a different identity regime and requires a separate characterization of credit rebuilding costs and re-entry constraints.

Unlike the merchant case, buyer compliance under credit expansion is sustained by the
loss of future transaction opportunities rather than by full collateral coverage.

\begin{corollary}[Strategy Ordering with Recovery]
\label{cor:micropayment_strategy_ordering}
Under the equilibrium in Theorem~\ref{thm:micropayment_conforming_ppe_credit_buyer},
there exists a parameter regime such that
\begin{equation}
u_i(a_{conf}) > u_i(a_{late}) > u_i(a_{def}).
\end{equation}
\end{corollary}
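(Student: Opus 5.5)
The plan is to prove the two inequalities separately, using the buyer stage utilities in Eq.~\ref{eq:buyer_stage_game_utility}. For each one, I will exhibit a sufficient set of parameter restrictions and then check that these restrictions can hold together with the threshold condition $\delta \ge \underline{\delta}$ of Theorem~\ref{thm:micropayment_conforming_ppe_credit_buyer}. The statement is existential ("there exists a parameter regime"), so one nonempty open set of parameters suffices.

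\emph{First inequality.} For $u_i(a_{conf}) > u_i(a_{late})$ I invoke Proposition~\ref{prop:micropayment_timely_settlement_buyer} directly. Imposing $P_{i,k}^{LB} > \Psi_{i,k}$ for all $k \in \mathcal{T}$ gives strict dominance, since the remaining gap terms $R_i^{TR}$, $C_i^{fin}$ and $\omega(R_i^{CB}+P_i^{CB})$ are weakly nonnegative.

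\emph{Second inequality.} For $u_i(a_{late}) > u_i(a_{def})$ I subtract the two branches; the $\sum_k (w_k+\Psi_{i,k})$ and $C_i^{stake}$ terms cancel, leaving
\[
u_i(a_{late}) - u_i(a_{def}) = P_i^{DB} + \omega\,(CL_i - P_i^{CB}) + R_i^{SB} - C_i^{fin} - \sum_{k\in\mathcal{T}} \bigl(v_k + P_{i,k}^{LB}\bigr).
\]
Using $P_i^{DB} = S_i^B$ and Assumption~\ref{assump:micropayment_buyer_bounded_credit}, so that $\sum_k v_k \le v_{\max}$, a sufficient condition is
\[
S_i^B + \omega\,(CL_i - P_i^{CB}) + R_i^{SB} > v_{\max} + \sum_{k} P_{i,k}^{LB} + C_i^{fin}.
\]
This is the stage-level analogue of the merchant bound in Theorem~\ref{thm:micropayment_liveness_bounded_execution}. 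The credit-limit loss $\omega CL_i$ from default under Assumption~\ref{assump:micropayment_buyer_grim_trigger} plays the role that the continuation value $\bar u_i$ plays in the threshold $\underline{\delta}$.

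\emph{Compatibility of the conditions.} This is the main obstacle. Credit expansion means $v_{\max} > S_i^B$, so the collateral alone cannot deliver the second inequality, and the late-payment penalties $P_{i,k}^{LB}$ appear with a negative sign. I would therefore take $P_{i,k}^{LB}$ only slightly above $\Psi_{i,k}$, say $P_{i,k}^{LB} = \Psi_{i,k} + \epsilon$. I would also take the contraction $P_i^{CB}$ to be a strict fraction of $CL_i$, which is natural since delay only contracts credit while default erases it. The remaining gap $v_{\max} - S_i^B + \sum_k \Psi_{i,k} + C_i^{fin}$ must then be covered by $\omega(CL_i - P_i^{CB}) + R_i^{SB}$. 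This holds once $\omega CL_i$ is large relative to the deviation surplus, i.e., when the buyer values credit capacity enough.

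That same regime, a large value of future credit, is what makes $\bar u_i$ large and hence $\underline{\delta}$ small. Both conditions therefore hold simultaneously for suitable parameters and any $\delta \ge \underline{\delta}$. Chaining the two strict inequalities by transitivity, as in Corollary~\ref{cor:micropayment_strict_dominance_conforming}, completes the argument.

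If the stage utilities alone turn out too restrictive, the fallback is to compare continuation-augmented payoffs instead. In that version I add $\frac{\delta}{1-\delta}\bar u_i$ to the conform and late branches and zero to the default branch, following Assumption~\ref{assump:micropayment_buyer_grim_trigger}, and the threshold of Theorem~\ref{thm:micropayment_conforming_ppe_credit_buyer} directly delivers the second inequality.
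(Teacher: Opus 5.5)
Your proof is correct, and it reaches the second inequality by a different route from the paper.

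The paper handles $u_i(a_{conf})>u_i(a_{late})$ exactly as you do, via Proposition~\ref{prop:micropayment_timely_settlement_buyer}. It obtains $a_{late}\succ a_{def}$ from the repeated game instead. By Theorem~\ref{thm:micropayment_conforming_ppe_credit_buyer} and Assumption~\ref{assump:micropayment_buyer_grim_trigger}, default forfeits future credit access, while delay keeps it at the cost of a finite penalty. So delay beats default ``for admissible penalty parameters,'' which the paper never specifies. That is essentially your fallback.

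Your main route subtracts the branches of Eq.~\ref{eq:buyer_stage_game_utility}, derives an explicit sufficient condition, and exhibits an explicit nonempty region: $P_{i,k}^{LB}=\Psi_{i,k}+\epsilon$, $P_i^{CB}$ a strict fraction of $CL_i$, and $\omega$ large. This gains rigor and literalness. It proves the displayed inequality for the stage utilities $u_i$ as defined, whereas the paper reasons about future credit access (continuation-augmented payoffs) while writing $u_i$.

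The cost is that the equilibrium hypothesis does no work in your regime. The ordering holds for every $\delta$, and conforming becomes stage-dominant because the stage-level credit-loss term $\omega(CL_i-P_i^{CB})$ covers the uncollateralized exposure $v_{\max}-S_i^B$. The existential statement permits this. It sits uneasily, though, with the section's claim that timely repayment is not stage-dominant under credit expansion: your regime lets $\omega CL_i$ do the job the paper assigns to the continuation value. The paper's route keeps that message.

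Two minor points. First, your conditions do not involve $\delta$, so compatibility with Theorem~\ref{thm:micropayment_conforming_ppe_credit_buyer} needs only $\bar u_i>0$, which makes $[\underline{\delta},1)$ nonempty. Your remark that a large credit value makes $\bar u_i$ large is unnecessary. Second, your fallback is not ``direct.'' The threshold absorbs only $v_{\max}-S_i^B$, so you would still need $\omega(CL_i-P_i^{CB})+R_i^{SB}>C_i^{fin}+\sum_{k\in\mathcal{T}}P_{i,k}^{LB}$, and this does not come for free, since $P_{i,k}^{LB}>\Psi_{i,k}\ge 0$ is forced by the first inequality. That condition is much weaker than the one in your main route, and it is exactly what the paper's ``admissible penalty parameters'' leaves implicit.
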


\begin{proof}
The result follows from combining stage-game dominance of $a_{conf}$ over $a_{late}$
with the repeated-game deterrence of $a_{def}$.
A complete proof is provided in Appendix~\ref{app:buyer_proofs}.
\end{proof}

\subsubsection{Individual Rationality via Capital Efficiency}
We evaluate buyers' IR relative to non-participation with fully collateralized settlement.
Under credit expansion, buyers avoid locking capital equal to $v_{i,t} - S_i^B$, thereby reducing opportunity costs.

\begin{proposition}[Buyer Participation via Capital Efficiency]
\label{prop:microapyment_buyer_participation_constraint}
Participation is individually rational whenever
\begin{equation}
(v_{i,t} - S_i^B)\cdot r_{\mathrm{opp}} \cdot
\frac{\tau_{\mathrm{epoch}}}{365 \cdot 24}
\ge
C_i^{\mathrm{fin}} .
\end{equation}
\end{proposition}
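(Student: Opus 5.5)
The plan is to compare two ways the buyer can finance the same epoch of transactions. One is participating in the credit-expansion protocol. The other is the outside option of fully collateralized settlement, normalized as in the IR analysis for merchants. Under full collateralization the buyer must lock capital equal to the entire outstanding obligation \(v_{i,t}\) for the epoch. Under the protocol the buyer locks only the stake \(S_i^B\). The first step is to fix this benchmark explicitly. The service values \(w_k\), the payments \(v_k\), the stake-related terms \(R_i^{SB}\) and \(C_i^{stake}\) on the posted \(S_i^B\), and the conforming-path rewards from Eq.~\ref{eq:buyer_stage_game_utility} are taken to be common to both options, or weakly favorable to participation. They therefore cancel, or can be dropped, in a sufficient condition.

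The second step quantifies the capital-efficiency gain. The incremental capital \(v_{i,t}-S_i^B\) is freed for one epoch. With \(r_{\mathrm{opp}}\) an annualized opportunity rate, the foregone return over an epoch of length \(\tau_{\mathrm{epoch}}\) hours is obtained by prorating linearly (simple interest) by the fraction of a year, \(\tau_{\mathrm{epoch}}/(365\cdot 24)\). This gives the per-epoch saving
\[(v_{i,t}-S_i^B)\, r_{\mathrm{opp}}\, \frac{\tau_{\mathrm{epoch}}}{365\cdot 24}.\]
This is nonnegative, since credit expansion means \(v_{i,t}\ge S_i^B\).

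The third step identifies the incremental cost of participating. The only cost term not present in the benchmark is the financing cost \(C_i^{\mathrm{fin}}\) associated with credit usage and delayed settlement. The utility difference between participation and the collateralized benchmark is then bounded below by the saving minus \(C_i^{\mathrm{fin}}\). On the conforming path, the additional rewards \(R_i^{TR}\) and \(\omega R_i^{CB}\) are nonnegative and only strengthen the inequality. Whenever the stated inequality holds, participation therefore weakly dominates the outside option, which is exactly IR.

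The main obstacle is justifying that the comparison reduces to these two terms alone. This means checking that every other component of \(u_i(a_{conf})\) is either identical under both regimes or favorable to participation. The statement gives a sufficient condition, not a necessary one, so the plan is to argue conservatively: drop all nonnegative reward terms, and assume the buyer conforms on path. That assumption is justified by Theorem~\ref{thm:micropayment_conforming_ppe_credit_buyer} for \(\delta\ge\underline{\delta}\), so default-related terms do not enter the on-path comparison.
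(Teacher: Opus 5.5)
Your proposal is correct and follows the same argument as the paper. Both compare participation against a fully collateralized outside option, cancel the common terms, and require the prorated opportunity-cost saving on the freed capital $v_{i,t}-S_i^B$ to cover $C_i^{\mathrm{fin}}$; the one difference is that your appeal to Theorem~\ref{thm:micropayment_conforming_ppe_credit_buyer} (and hence to $\delta\ge\underline{\delta}$) is unnecessary, since the paper's proof is a single-epoch accounting comparison with no restriction on $\delta$.
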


\begin{proof}
Under non-participation with fully collateralized settlement, the buyer incurs an
epoch-level opportunity cost of
\[
(v_{i,t} - S_i^B)\cdot r_{\mathrm{opp}} \cdot \frac{\tau_{\mathrm{epoch}}}{365 \cdot 24},
\]
where $r_{\mathrm{opp}}$ is an annualized rate normalized using a standard 365-day
convention. Participation avoids this cost while incurring $C_i^{\mathrm{fin}}$.
IR therefore requires the avoided opportunity cost to weakly
exceed $C_i^{\mathrm{fin}}$, yielding the stated condition. See
Appendix~\ref{app:buyer_proofs} for details.
\end{proof}

\subsubsection{Summary of Buyer Incentive Analysis}
\label{sec:micropayment_buyers_incentive_summary}

The buyer-side analysis establishes IC under credit expansion.
Late-payment penalties eliminate strategic delay at the stage-game level.
Default is deterred in the repeated game by the loss of continuation value,
yielding a PPE above a critical discount factor. Graduated penalties admit a recovery path while preserving strict preference
for timely repayment. Participation is individually rational when the capital efficiency gains from credit expansion exceed financing costs.

\section{Implementation Feasibility}
\label{sect:micropayment_implementation}
This section uses an Arbitrum Nitro prototype to instantiate the incentive-critical execution and verification paths analyzed above. The purpose is not to present a full end-to-end payment system, but to show that the settlement, commitment, and incentive-enforcement paths required by the characterization can be realized at the application layer. User interfaces and identity management are treated as external deployment assumptions, as they do not affect the on-chain enforcement or equilibrium properties evaluated in this section. The prototype includes credit usage constraints, trust-state updates, incentive-triggered state transitions, over-limit auction logic, and deferred settlement with batched on-chain commitment. It is deployed on the Sepolia test network and does not modify the sequencing or dispute-resolution logic of the underlying Arbitrum rollup.

To reduce on-chain costs, micropayment transactions are accumulated off-chain within each
settlement epoch and organized into a Merkle tree, whose root commits to all transactions
in the epoch and is submitted on-chain. Transactions affecting settlement or incentives
are verified using standard Merkle inclusion proofs against the committed root, so
on-chain verification cost is constant per epoch and independent of batch size.
Relative to a baseline that submits transactions individually, this batching approach
reduces on-chain gas consumption by over ninety percent while preserving verifiability.

In sum, the prototype provides execution-level evidence that the incentive-critical paths analyzed in this paper can be realized on an existing Layer 2 rollup without custodial trust, protocol-level modification, or unbounded on-chain overhead.

\section{Experimental Evaluation}
\label{sect:micropayment_experimental_evaluation}

This section evaluates the on-chain cost characteristics of the illustrative instantiation under delayed settlement, using deployed smart contracts on the Arbitrum Nitro Sepolia test network. We compare direct on-chain submission without batching ($V_0$) against
batched settlement under credit limits with deferred commitment ($V_1$).

\subsection{Gas Efficiency under Delayed Settlement}

We compare direct on-chain submission ($V_0$) with batched settlement under credit limits
($V_1$) for transaction volumes ranging from 100 to 500 transfers per batch.
Table~\ref{tab:gas_comparison_usd} reports gas consumption under steady-state operation.
Across all batch sizes, batched settlement substantially reduces on-chain gas consumption
relative to direct submission.

\begin{table}[t]
  \centering
  \caption{Gas consumption under direct submission ($V_0$) and steady-state batched
  settlement ($V_1$). Percentage savings are computed relative to $V_0$.
  Steady-state costs exclude one-time initialization overhead.}
  \label{tab:gas_comparison_usd}
  \vspace{2pt}
  \setlength{\tabcolsep}{4pt}
  \begin{tabular}{c r r r}
    \toprule
    Tx Vol. & $V_0$ Gas & $V_1$ Gas & Save (\%) \\
    \midrule
    100 & 5,863,758 & 106,274 & 98.19 \\
    200 & 11,696,824 & 106,274 & 99.09 \\
    300 & 17,532,352 & 106,274 & 99.39 \\
    400 & 23,287,671 & 106,262 & 99.54 \\
    500 & 29,735,785 & 106,274 & 99.64 \\
    \bottomrule
  \end{tabular}
\end{table}

\subsection{Constant-Time On-Chain Verification}

To evaluate the scalability of on-chain verification, we measure the gas cost of committing
Merkle roots that summarize batched transaction states and residual credit states.
Table~\ref{tab:commitroot_gas_scala} shows that the gas cost of committing both transaction
and credit Merkle roots remains effectively constant across batch sizes, confirming that
on-chain verification overhead does not scale with transaction volume.

\begin{table}[t]
  \centering
  \caption{Gas consumption for committing transaction and credit Merkle roots under
  different transaction volumes. Each submission commits a single Merkle root and incurs
  constant on-chain cost.}
  \label{tab:commitroot_gas_scala}
  \vspace{2pt}
  \setlength{\tabcolsep}{3pt}
  \begin{tabular}{c r r}
    \toprule
    Tx Vol. & TxRoot Gas & CreditRoot Gas \\
    \midrule
    100 & 21,892 & 21,892 \\
    200 & 21,892 & 21,892 \\
    300 & 21,880 & 21,892 \\
    400 & 21,892 & 21,892 \\
    500 & 21,892 & 21,892 \\
    \bottomrule
  \end{tabular}
\end{table}

\noindent\textbf{Summary.}
These results provide implementation-level evidence that the incentive-critical settlement and verification paths can be realized on a Layer 2 rollup with bounded on-chain overhead.
\section{Discussion and Limitations}
\label{sect:micropayment_discussion_limitations}
This section clarifies the scope and assumptions under which the incentive analysis applies and identifies structural trade-offs inherent in the proposed setting.

\subsection{Equilibrium Scope and Deviation Assumptions}
The incentive analysis models buyers and merchants as rational agents interacting under
publicly observable signals, consistent with the repeated-game framework in
Section~\ref{sec:repeated_game_framework}. Deviations are treated as observable events that
trigger deterministic, contract-enforced penalties. The analysis focuses on
application-layer behavior and abstracts away from cryptographic failures, key compromise,
and consensus-layer attacks, which are orthogonal to incentive-based enforcement.

The buyer-side deterrence result should be interpreted under an identity-friction regime. The analysis does not require a globally permanent real-world identity, but it does require that default cannot be followed by costless re-entry with the same credit capacity. In practical deployments, this condition can be approximated through conservative initial credit limits, trust- and time-dependent credit expansion, restitution requirements, credential-based access, or rate-limited rebuilding after default. These mechanisms bound the return achievable per identity and preserve a nonzero continuation-value loss after default. If a system instead permits costless identity reset with immediate restoration of credit capacity, the buyer-side incentive region changes and the framework must be re-characterized under that weaker identity regime.

Guarantors are modeled as participants in isolated over-limit auctions rather than as
agents engaged in repeated delivery--payment interactions. Their incentives are governed
by collateral-backed, contract-level commitments, allowing guarantor behavior to be
analyzed at the epoch level without cross-epoch incentive coupling.

At the validator and consensus layers, collateral buffers serve cryptoeconomic security
objectives~\cite{durvasula2024robust}. In contrast, stake in our setting plays an
application-layer role, anchoring incentives rather than fully collateralizing transaction
value. Accordingly, the analysis focuses on IC and equilibrium
behavior along the protocol execution path.

\subsection{Collusion in Over-Limit Auctions}
The over-limit auction provisions temporary liquidity using a reverse Vickrey design
with commit--reveal bidding. While this structure enforces truthful bidding under
independent participation, collusion among guarantors cannot be fully ruled out in
permissionless settings. The protocol mitigates collusion risks through structural
constraints, including commit--reveal bidding and collateral-backed participation, but a
sufficiently large coalition controlling available liquidity could still coordinate to
raise effective interest rates. Designing auction mechanisms that remain robust to
collusion under anonymity is therefore outside the scope of this work.

\subsection{Privacy and Verifiability Trade-offs}
The protocol exposes a fundamental trade-off between privacy and public verifiability.
Sustaining incentive-compatible enforcement of auction outcomes requires on-chain
disclosure of winning bids, which enables trustless verification and consistent risk
assessment over repeated interactions. Intuitively, guarantors improve pricing only if
they can observe and learn from past outcomes; if bid and trust information is hidden,
participants must price against worst-case uncertainty. While cryptographic techniques
such as zero-knowledge proofs could conceal bid values, hiding this information obscures
shared default risk and prevents risk differentiation across epochs, degrading allocation
efficiency and penalizing low-risk participants over time. Accordingly, public data
availability in this setting is an economic requirement for incentive stability, rather
than a purely engineering choice.

\subsection{Operational and Information Constraints}
The incentive analysis assumes fully rational agents and focuses on instant, low-value
transactions with tightly coupled payment and delivery. Bounded rationality, non-instant
delivery, and participation frictions are outside the scope of this paper. In addition, batching transactions through off-chain aggregation prioritizes efficient per-epoch
verification and incentive enforcement, but does not provide cross-epoch transaction
traceability. Extensions that aim to preserve long-term auditability or enhance privacy
through cryptographic techniques such as zero-knowledge proofs involve additional trade-offs
among scalability, latency, and information disclosure and are not considered in this work.

\section{Related Work}
\label{sect:micropayment_related_work}

This section situates our work within prior research on non-custodial blockchain-based micropayments, incentive mechanisms for decentralized systems, and Layer~2 cost reduction, highlighting how our approach differs in terms of capital efficiency and incentive structure. 

\subsection{Non-Custodial Micropayment Systems}
Early work explored decentralized settlement mechanisms for currency exchange and service payments, demonstrating how smart contracts and on-chain deposits can replace centralized clearing and escrow services~\cite{wu2017exploration,kudva2020pebers}. These systems rely on correct participant behavior to preserve liveness and security, which motivates the design of incentive mechanisms beyond consensus-layer rewards.

Payment channels represent a prominent approach to scaling micropayments by moving transactions off-chain and committing only final states to the blockchain. Bitcoin’s Lightning Network~\cite{poon2016bitcoin} and Ethereum’s Raiden Network~\cite{raiden2025} exemplify this design. While payment channels improve throughput without modifying the underlying consensus protocol, they impose operational complexity. Participants must remain online or rely on watchtower services, external monitors that
observe the blockchain and react to fraudulent channel closures, to prevent fraud, which introduces additional trust assumptions and coordination overhead~\cite{zhang2021counter,hao2018fastpay,miller2019sprites}. Moreover, incomplete on-chain records complicate transaction traceability and dispute resolution. In contrast to channel-based designs, our work avoids persistent bilateral channels and instead relies on protocol-enforced incentives and credit limits to support high-frequency payments without requiring continuous online monitoring. State-channel systems such as Lightning and Raiden rely on fully pre-funded channels, where the maximum transferable value is bounded by funds locked at channel opening, rather than on stake-based penalties or credit expansion.

Several smart contract-based micropayment systems employ collateral to enforce correct behavior. FastPay~\cite{hao2018fastpay} allows buyers and sellers to deposit funds and introduces a broker role to temporarily raise credit limits, while enforcing penalties for deviations. Snappy~\cite{MavroudisWDKC20} requires both customers and merchants to lock collateral to guarantee transaction completion prior to blockchain finality. Other platforms rely on mutual collateral forfeiture to establish trust between trading parties~\cite{le2019implementation,hasan2018blockchain}. These approaches improve security but tightly couple transaction capacity to locked capital. Our work differs by explicitly targeting capital efficiency through dynamically adjusted credit limits and by formalizing IC under under-collateralized operation.

\subsection{Game-Theoretic Analysis of Blockchain Incentives}
Game theory has been widely applied to analyze incentives and strategic behavior in
blockchain systems, including consensus participation, mining strategies, and reward
allocation~\cite{liu2019survey,azouvi2021sok,min2019security,wang2023security}. Prior work
has employed a range of game-theoretic models to study decentralized environments, such
as mining pool selection, electricity trading, and resource allocation, often under
strategic uncertainty or incomplete information~\cite{zhang2023analysis,xia2020bayesian,
asheralieva2020bayesian,xu2017game}. Most existing analyses focus on consensus-layer security or infrastructure-level
incentives. By contrast, our work applies repeated-game analysis with public monitoring
to application-layer micropayment protocols, where IC depends on credit expansion, deferred settlement, and publicly observable execution outcomes. We
characterize equilibrium behavior of buyers and merchants under protocol-enforced
penalties, extending repeated-game incentive analysis beyond mining and consensus to
non-custodial payment mechanisms.

\subsection{On-Chain Cost Reduction in Layer~2 Systems}

Reducing on-chain transaction costs is a primary motivation for Layer~2 solutions. Payment channels reduce on-chain load by processing transactions off-chain~\cite{zhang2018anonymous,burchert2018scalable,galal2019efficient}, but their deployment in open retail environments is limited by the need to establish channels between individual users and merchants. Alternative approaches leverage sidechains or blockchain-of-blockchains architectures to lower fees~\cite{networkskale,kilpatrickalthea,kwon2019cosmos}. While these systems reduce costs, they introduce additional trust assumptions or operational complexity.

Non-custodial Layer~2 designs aim to retain the security guarantees of the Layer~1 chain while improving scalability. Commit chains such as NOCUST~\cite{khalil2018commit} and Plasma~\cite{poon2017plasma} aggregate off-chain activity and periodically commit state roots on-chain. These designs reduce transaction fees but face challenges related to finality guarantees and operator centralization. Rollup techniques further improve cost efficiency by batching transactions and submitting compressed data or proofs to the main chain~\cite{fekete2024trust,hoang2023sssm,liu2024prosecutor}. Our work complements rollup-based approaches by analyzing the incentive and equilibrium
conditions under credit-based execution, which reduces the frequency of on-chain
commitments by allowing deferred settlement. Transaction fees are incurred only when aggregated state is committed, allowing additional cost savings beyond batching alone.

Prior work has addressed scalability, security, and incentive alignment largely as separate concerns. This paper differs by characterizing the incentive conditions under which non-custodial execution and under-collateralized credit expansion can coexist in a repeated-game micropayment setting.

\section{Conclusion}
\label{sect:micropayment_conclusion}

This paper characterizes the incentive conditions under which non-custodial micropayments can support credit limits beyond full collateralization. Using a repeated-game framework with public monitoring, the analysis identifies how bounded exposure, verifiable settlement outcomes, and continuation value constrain strategic behavior under credit expansion.

For merchants, timely delivery is sustained through stage-game incentive conditions under bounded liability. For buyers, conforming repayment can be sustained in an under-collateralized regime when the loss of future credit access outweighs the one-shot gain from default. These results identify the IC conditions required for reward and penalty mechanisms to sustain conforming behavior under per-epoch interactions and long-run repeated play.

The Arbitrum Nitro prototype serves as an illustrative application-layer instantiation of the settlement, commitment, and incentive-enforcement paths used in the analysis. The implementation evidence shows that these paths can be realized with constant on-chain verification cost and substantial gas reduction, without modifying the underlying rollup protocol.

The results clarify the incentive and enforcement conditions under which non-custodial credit expansion can support capital-efficient micropayments beyond full collateralization.

\bibliographystyle{IEEEtran}
\bibliography{reference}

\appendix
\section{Proofs for Incentive Analysis}
\label{app:micropayment_incentive_proofs}

\subsection{Proofs for Merchant Incentive Analysis}
\label{app:merchant_proofs}

\subsubsection{Proof of Corollary~\ref{cor:micropayment_strict_dominance_conforming}}

By Proposition~\ref{prop:micropayment_dominance_of_timely_delivery},
the conforming action strictly dominates delay,
$u_j(a_{conf}) > u_j(a_{late})$.
By Theorem~\ref{thm:micropayment_liveness_bounded_execution},
late delivery strictly dominates abandonment,
$u_j(a_{late}) > u_j(a_{def})$.
The strict dominance ordering
$u_j(a_{conf}) > u_j(a_{late}) > u_j(a_{def})$
follows by transitivity.

\subsubsection{Proof of Lemma~\ref{lem:micropayment_future_loss}}

For any public state $s$ reachable during punishment, the difference between the
conforming utility $u_j(a_{conf}, s)$ and the punishment utility
$u_j^{\mathcal{P}}(a_{conf}, s)$ corresponds to the suspension of protocol-defined
incentive rewards:
\begin{equation}
u_j(a_{conf}, s) - u_j^{\mathcal{P}}(a_{conf}, s)
=
\sum_{k \in \mathcal{T}(s)} R_{j,k}^{FM} + R_j^{SM}.
\end{equation}

This difference admits a per-epoch lower bound $\underline{\ell}_j > 0$.
Summing this loss over a punishment duration of $T$ epochs with discount factor
$\delta$ yields
\begin{equation}
\Delta u_j^{\mathrm{loss}}
=
\sum_{\tau=1}^{T} \delta^{\tau-1} \underline{\ell}_j
=
\left(\frac{1-\delta^T}{1-\delta}\right)\underline{\ell}_j,
\end{equation}
which establishes the stated bound.

\subsubsection{Proof of Theorem~\ref{thm:micropayment_merchant_ppe}}

Fix an arbitrary public history and the induced public state.
By Corollary~\ref{cor:micropayment_strict_dominance_conforming},
the conforming action $a_{conf}$ strictly dominates all feasible
one-shot deviations in the stage game.
Hence, any deviation yields a strictly negative immediate gain
relative to $a_{conf}$.

Any deviation is publicly observed and triggers entry into the
punishment phase.
By Lemma~\ref{lem:micropayment_future_loss},
the resulting discounted continuation loss is non-negative
for all $\delta \in [0,1)$. Therefore, for every public history and every feasible one-shot deviation,
the one-shot deviation condition is satisfied.
By the OSDP for repeated games with public monitoring,
the prescribed public strategy profile constitutes a PPE for all discount factors $\delta \in [0,1)$.


\subsection{Proofs for Buyer Incentive Analysis}
\label{app:buyer_proofs}

\subsubsection{Proof of Proposition~\ref{prop:micropayment_timely_settlement_buyer}}
\label{app:proof_prop_timely_buyer}

Define the utility gap
$\Delta u_i = u_i(a_{conf}) - u_i(a_{late})$.
From Eq.~\ref{eq:buyer_stage_game_utility},
\begin{equation}
\label{eq:buyer_utility_gap_conf_late}
\Delta u_i
=
\sum_{k \in \mathcal{T}} \left( P_{i,k}^{LB} - \Psi_{i,k} \right)
+ R_i^{TR} + C_i^{fin}
+ \omega \left( R_i^{CB} + P_i^{CB} \right).
\end{equation}

All terms except the first summation are weakly non-negative:
$R_i^{TR} \ge 0$ by construction, $C_i^{fin} \ge 0$, and
$\omega \cdot ( R_i^{CB} + P_i^{CB} ) \ge 0$ under $\omega \ge 0$ and
non-negative credit-capacity changes induced by conforming versus delayed behavior.
Therefore, a sufficient condition for $\Delta u_i > 0$ is
\begin{equation}
\label{eq:buyer_sufficient_cond_penalty_level}
\sum_{k \in \mathcal{T}} P_{i,k}^{LB}
>
\sum_{k \in \mathcal{T}} \Psi_{i,k}.
\end{equation}

In particular, if $P_{i,k}^{LB} > \Psi_{i,k}$ for all $k \in \mathcal{T}$, then
$\sum_{k \in \mathcal{T}} ( P_{i,k}^{LB} - \Psi_{i,k} ) > 0$, implying $\Delta u_i > 0$.
Hence, $u_i(a_{conf}) > u_i(a_{late})$, and $a_{conf}$ strictly dominates $a_{late}$
at the stage-game level.

\subsubsection{Proof of Theorem~\ref{thm:micropayment_conforming_ppe_credit_buyer}}

We construct a public strategy profile $\sigma^*$ as follows.
The buyer plays \(a_{conf}\) if the public history is clean, meaning that no default has occurred, and remains on the default path otherwise. Merchants serve if the history is clean and refuse service after default.

We verify incentive compatibility using the One-Shot Deviation Principle.
On the equilibrium path, serving a conforming buyer yields non-negative
utility for merchants.
After default, serving yields no payment and positive execution cost,
so refusal is optimal. For the buyer, compare conforming to default at any epoch.
The continuation value on the equilibrium path is
\begin{equation}
V_{i,\mathrm{cont}}
=
\sum_{\tau=1}^{\infty} \delta^{\tau-1} \bar{u}_i
=
\frac{\bar{u}_i}{1-\delta}.
\end{equation}

From Eq.~\ref{eq:buyer_stage_game_utility},
the one-shot gain from default relative to conforming
is upper bounded by $v_{i,t} - S_i^B$.
The buyer incentive constraint is therefore
\begin{equation}
\delta \cdot V_{i,\mathrm{cont}} \ge v_{i,t} - S_i^B.
\end{equation}

Substituting the continuation value and
considering the worst-case exposure $v_{i,t} = v_{\max}$
yields
\begin{equation}
\frac{\delta \cdot \bar{u}_i}{1-\delta} \ge v_{\max} - S_i^B.
\end{equation}

Rearranging terms gives
\begin{equation}
\delta \cdot (\bar{u}_i + v_{\max} - S_i^B) \ge v_{\max} - S_i^B,
\end{equation}
which implies
\begin{equation}
\underline{\delta}
=
\frac{v_{\max} - S_i^B}{v_{\max} - S_i^B + \bar{u}_i}.
\end{equation}

For all $\delta \ge \underline{\delta}$,
the buyer incentive constraint is satisfied,
establishing the existence of a conforming PPE.

\subsubsection{Proof of Corollary~\ref{cor:micropayment_strategy_ordering}}

We establish the strict strategy ordering in two steps. 

First, by Theorem~\ref{thm:micropayment_conforming_ppe_credit_buyer} and Assumption~\ref{assump:micropayment_buyer_grim_trigger}, default is not a best response because the permanent loss of future credit access along the default path outweighs any one-shot gain. Under delayed repayment, the buyer retains future credit access
while incurring a finite late-payment penalty.
Hence $a_{late} \succ a_{def}$ holds for admissible penalty parameters. 

Second, by Proposition~\ref{prop:micropayment_timely_settlement_buyer},
there exist penalty parameters such that
$u_i(a_{conf}) > u_i(a_{late})$.
Combining the two inequalities yields
\begin{equation}
u_i(a_{conf}) > u_i(a_{late}) > u_i(a_{def}),
\end{equation}
which establishes the claim.

\subsubsection{Proof of Proposition~\ref{prop:microapyment_buyer_participation_constraint}}
We compare protocol participation with a fully collateralized outside option. Under the outside option, the buyer pre-funds the full payment amount $v_{i,t}$
and incurs the corresponding opportunity cost.
The utility is
\begin{equation}
U_{i,t}^{\mathrm{out}}
=
U_{i,t}^{\mathrm{all}}
- v_{i,t}
- v_{i,t} \cdot r_{\mathrm{opp}} \cdot
\frac{\tau_{\mathrm{epoch}}}{365 \cdot 24}.
\end{equation}

Under protocol participation, the buyer locks only $S_i^B \le v_{i,t}$ and
incurs the financing cost $C_i^{\mathrm{fin}}$.
The utility is
\begin{equation}
U_{i,t}^{\mathrm{in}}
=
U_{i,t}^{\mathrm{all}}
- v_{i,t}
- S_i^B \cdot r_{\mathrm{opp}} \cdot
\frac{\tau_{\mathrm{epoch}}}{365 \cdot 24}
- C_i^{\mathrm{fin}}.
\end{equation}

Participation is individually rational whenever
$U_{i,t}^{\mathrm{in}} \ge U_{i,t}^{\mathrm{out}}$.
Canceling common terms yields
\begin{equation}
(v_{i,t} - S_i^B) \cdot r_{\mathrm{opp}} \cdot
\frac{\tau_{\mathrm{epoch}}}{365 \cdot 24}
\ge
C_i^{\mathrm{fin}},
\end{equation}
which establishes the participation condition.

\end{document}